\newtheorem{definition}{\bf Definition}
\newtheorem{lemma}{\bf Lemma}
\newtheorem{proposition}{\bf Proposition}
\newtheorem{remark}{\bf Remark}
\newtheorem{proof}{Proof}
\begin{document}

\title{Age-driven Joint Sampling and Non-slot Based Scheduling for Industrial Internet of Things}

\author{Yali Cao, Yinglei Teng,~\IEEEmembership{Senior Member,~IEEE}, Mei Song, Nan Wang
\thanks{Yali Cao, Yinglei Teng,  Mei Song and Nan are with Beijing Key Laboratory of Space-ground Interconnection and Convergence, the School of Electronic Engineering, Beijing University of Posts and Telecommunications,
Beijing, 100876, China. (e-mail: lilytengtt@gmail.com)}
}

\maketitle

\begin{abstract}
Effective control of time-sensitive industrial applications depends on the real-time transmission of data from underlying sensors. Quantifying the data freshness through age of information (AoI), in this paper, we jointly design sampling and non-slot based scheduling policies to minimize the maximum time-average age of information (MAoI) among sensors with the constraints of average energy cost and finite queue stability. To overcome the intractability involving high couplings of such a complex stochastic process, we first focus on the single-sensor time-average AoI optimization problem and convert the constrained Markov decision process (CMDP) into an unconstrained Markov decision process (MDP) by the Lagrangian method. With the infinite-time average energy and AoI expression expended as the Bellman equation, the single-sensor time-average AoI optimization problem can be approached through the steady-state distribution probability. Further, we propose a low-complexity sub-optimal sampling and semi-distributed scheduling scheme for the multi-sensor scenario. The simulation results show that the proposed scheme reduces the MAoI significantly while achieving a balance between the sampling rate and service rate for multiple sensors.  
\end{abstract}

\begin{IEEEkeywords}
Industrial Internet of Things(IIoT); Time sensitive systems; Age of information(AoI); Markov decision process(MDP); URLLC
\end{IEEEkeywords}

\section{Introduction}
\IEEEPARstart{T}{he} vigorous development of 5G  enables the vision of future intelligent industrial Internet of things (IIoT) by driving operational efficiencies through trailblazing new radio (NR) Technology. Many industrial applications are part of ultra-reliable and low latency communication (URLLC) service that requires time-sensitive packet delivery, such as autonomous driving \cite{1}, intelligent manufacturing \cite{2}, and sensor networks \cite{3}, etc. These packets are periodically generated from various source nodes and transmitted to the data center through the time-varying wireless channels for information extraction and analysis to support effective industrial decision-making and control. However, restricted battery energy and communication resources may affect the acquisition of fresh information, limiting the timeliness of industrial monitoring. Therefore, how to design sampling and scheduling strategy to attain the trade-off between data freshness and resource consumption is of great significance.

Age of Information (AoI) is a metric used to quantify the freshness of data in recent years, namely the time elapsed since the generation of the last received packet at the data centre \cite{4}. The AoI includes the sampling interval and transmission delay of the packet, which represents the frequency of information being updated from the perspective of the receiver. Existing researches on sampling and scheduling problems to minimizing AoI can be classified into two categories. The first category  considers that packets are queued in the local buffer before being transmitted \cite{5,6,7,8,9,10,11}. Some studies derive the expression of Aol under different queuing rules by queuing theory. In the first-come-first-served (FCFS) system, the average AoI expression is derived in single-source \cite{5} and multi-source systems \cite{6,7,8}.
The optimal frequency of packets updates is obtained by formulating the optimal peak age of information (PAoI) problem using quasiconvex optimization \cite{7} or a derivative-free algorithm \cite{8}. In the last-come-first-served (LCFS) queuing system, the performance of average AoI is analyzed under the preemptive and non-preemptive scheme with the different distribution of service times in \cite{9}. The limitation of these studies lies in that the scheduling of sensors usually obeys the fixed probability distribution and can not realize the online adaptive transmission. Others pursue the optimal sampling and transmission strategy by establishing the relation between the AoI at the destination and the AoI at the queue. The work in \cite{10} proposes a low-complexity suboptimal policy through the linear approximation functions in relative value iteration (RVI) algorithm to overcome the curse of dimensionality caused by the increasing number of users 
with non-uniform packet sizes. In \cite{11}, the author designs a contention-based random access scheme based on the Whittle index to improve the access timeliness of sources. For the second category, there is no packet buffer, or the newly arrived packet immediately replaces the old one \cite{12,13,14,15,16,17}. Under the constraint of average power \cite{12} or energy \cite{13}, the optimal deterministic scheduling strategy is proved to have a threshold structure with respect to the AoI. The work in \cite{14} reveals that the maximum age first scheduling strategy realizes the best age performance for any given sampling rate. Then the Dynamic Programming (DP) is used to investigate the optimal sampling problem for minimizing the total average age (TaA) of sources. The author in \cite{15} expresses the multi-user scheduling  problem as a restless multi-armed bandit (RMAB) and uses the Whittle Index based approach to find the low complexity scheduling policies. Considering that the curse of dimensionality limiting the effectiveness of the solution with a growing number of sources, multi-agent deep reinforcement learning (DRL) are potential methods, such as the Deep Q Networks (DQN) \cite{18} and the Deep Determinist Policy Gradient (DDPG) \cite{19}. Nevertheless, these algorithms request BS as a centralized agent to learn an overall policy, which needs to know the status information of each source in time and brings about a heavy feedback burden with the densification of 
sources. Therefore, how to reduce communication overhead is still a tough issue.

In addition, the above researches aim at minimizing the time-average AoI. In practice, decision-making requires multiple sources to transmit packets synchronously as far as possible, e.g., the collaboration of automated vehicles, which motivates us to consider the tolerance for the worst time-average AoI in a system. Besides, few studies combine the characteristic of short packets in URLLC scenario with advanced transmission technology to further reduce the AoI, e.g., 5G New Radio (NR) \cite{20}.

According to the Third Generation Partnership Project (3GPP) R15 \cite{21}, 5G NR supports the flexible design of subframe and slot structures to shorten the duration of the transmission time interval (TTI). A slot of 14 OFDM symbols can be divided into several mini-slots with the length ranging from 1 to 13 symbols. The non-slot based scheduling scheme enables immediate transmission without waiting for the whole slot duration thus can effectively reduce the latency. Moreover, there has been some mini-slot adoption to the co-existence problem of enhanced Mobile Broadband (eMBB) and URLLC services by allocating mini-slot resources. \cite{19,22,23}. In \cite{22}, the author proposes an efficient resource allocation strategy to minimizing the data rate loss of eMBB traffic. The work in \cite{23} evaluates various scheduling schemes to optimize the throughput utility of eMBB flows while guaranteeing the ultra-low delayed demand of URLLC flows.

To quickly respond to time-sensitive services in the IIoT, we jointly design intelligent sampling and non-slot based transmission strategies to optimize the maximum time-average age of information (MAoI) among sensors. In particular, the main contributions of this paper are as follows.
\begin{itemize}
    \item An optimization framework of age-driven joint data sampling and non-slot based scheduling is proposed. To meet the ultra-low latency requirement, we adopt the non-slot based schedule pattern of 5G and design an AoI-sensitive sampling scheme. Different from previous work, we consider both energy and queue stability in the multi-sensor MAoI optimization problem to alleviate the pressure on cache, network communication, and energy consumption, which is an intractable stochastic optimization with mixed-integer programming problem (MIP).
    \item For the single-sensor case, the scheduling problem is modeled as a constrained Markov decision process (CMDP), which is transformed into an unconstrained Markov decision process (MDP) through Lagrangian relaxation. In terms of the steady-state distribution probability of the corresponding Markov chain, we derive the mathematical expressions of energy consumption and time-average AoI, which contributes to searching for the optimal sampling policy.
    \item For the multi-sensor case, we present a low-complexity sub-optimal sampling and semi-distributed scheduling scheme, which avoids the explosion of communication overhead compared with centralized scheduling in large-scale uplink networks.  
    \item The simulation results show that the packets arrival rate and scheduling delay as well as data freshness and energy consumption of each sensor are well balanced through the control of the proposed scheme. Besides, the non-slot based scheduling policy can effectively reduce MAoI and make good use of energy compared with the existing slot based scheduling policy.
\end{itemize}   
\section{SYSTEM MODEL}
Consider a real-time IIoT monitoring system composed of $\mathcal{N}= \left\{ {1,2, \cdots ,N} \right\}$ sensors and a destination, as shown in Fig.~\ref{fig1}. Each sensor samples packets periodically and sends them to the destination after temporarily storing in a local FCFS queue.
\begin{figure}[htbp]
    \centering
    \includegraphics[width=0.45\textwidth]{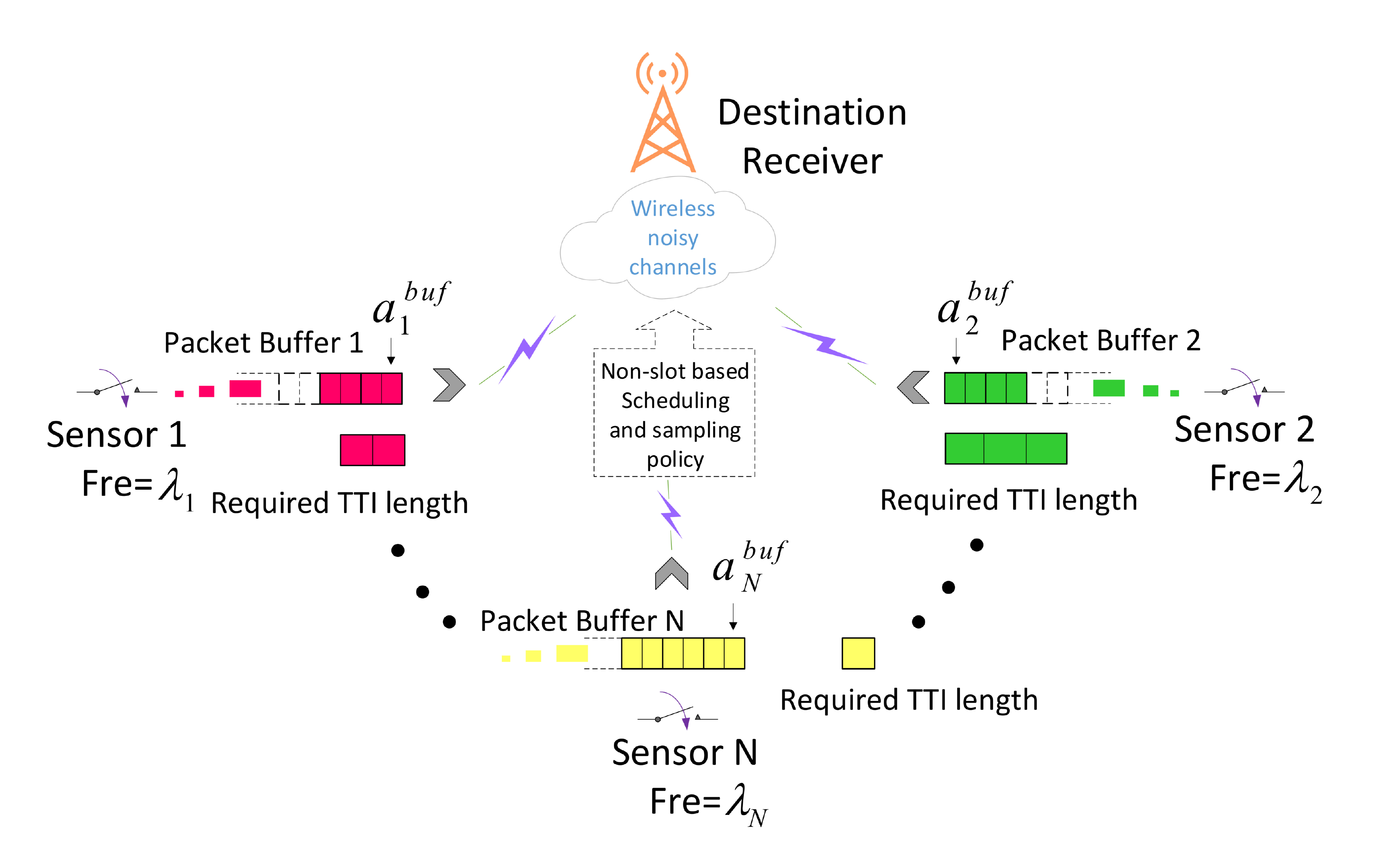}
    \caption{A real-time wireless sensor communication system.}
    \label{fig1}
\end{figure}
According to R15 in 3GPP, both flexible subframe structures and slot structures are adopted to support flexible TTI. The subcarrier spacing (SCS) is configured as 15kHz and expanded with a factor of $2^\mu$. Correspondingly, a subframe lasting 1ms is divided into $2^\mu$ slots, where $\mu\in\left\{1,2,3,4,5\right\}$. Each resource block shorten the occupancy time by expanding the frequency domain resources, which remarkably reduce the transmission delay.

Furthermore, according to 14 OFDM symbols slot arrangement \cite{21}, we divide a slot into 14 mini-slots with the length $\Delta t=1/14ms$ under the case of SCS with 15kHz \cite{21}, where the shortest length lasts for one OFDM symbol as shown in Fig.~\ref{fig2}.  Let $t=\left\{ {1,2, \cdots ,T} \right\}$ index the flexible TTI, which consists of multiple mini-slots and has a variable length. Thereby, instead of waiting for the whole duration of one slot, various mini-slots can be adaptively assigned to sensors in order to satisfy the requirements of ultra-lower transmission delay.

\begin{figure}[htbp]
    \centering
    \includegraphics[width=0.45\textwidth]{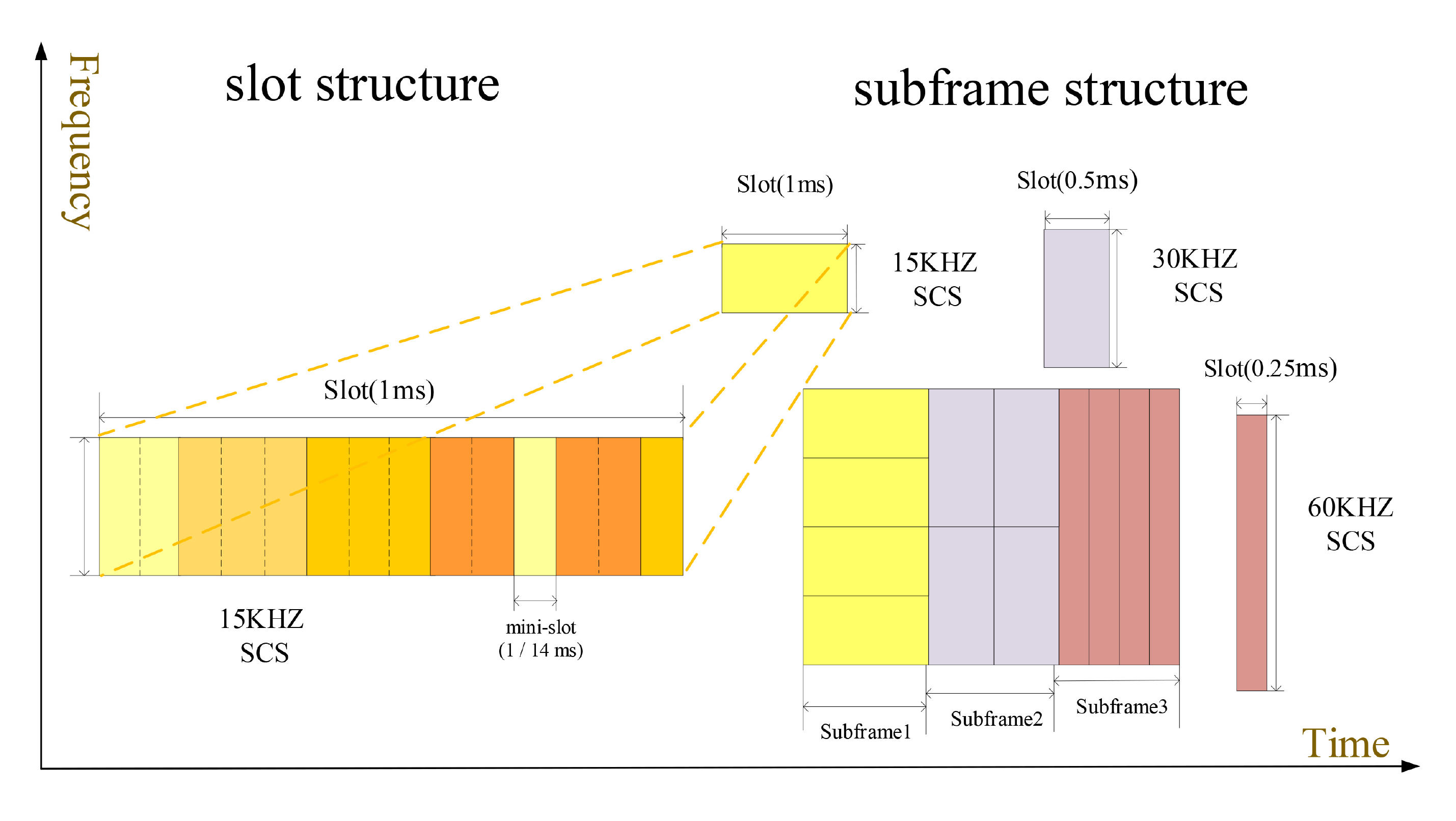}
    \caption{Division of 5G subframe structure and slot structure.}
    \label{fig2}
\end{figure}
\subsection{Transmission Model}
Define the wireless channel between the destination and each sensor with $W$-state block fading model. $W$ is a positive integer. The channel state is i.i.d across the slots and remains stable during a slot. The channel state of sensor $n$  at TTI  $t$ is denoted by $h_n(t)\in \left \{ {1,2, \cdots ,W} \right \}$. Let probability distribution of $h_n(t)$ be
\begin{equation}\label{eq1}
P_{r}\left\{h_{n}(t)=w\right\}=\alpha_{w},
\end{equation}
where $\alpha_{w}\in[0,1]$ and $\sum_{w=1}^{W} \alpha_{w}=1$. Smaller $w$ represents that channel quality is worse.

There will be conflicts if multiple sensors send packets simultaneously. Therefore, similar to \cite{11} \cite{14} \cite{16}, the central scheduler selects one sensor to transmit packets in each TTI. Let $u(t)=\{0,1 \ldots N\}$ be the schedule action of the central scheduler at the beginning of TTI $t$, $u(t)=n, n \in\{1,2 \ldots N\}$ means that the sensor $n$ is scheduled, and $u(t)=0$ represents that no sensor transmits packets. 
Consider the scenario that multiple sensors of the same type are to monitor a physical process, such as temperature and humidity monitoring. The packets sent to the destination are encapsulated in the same format. We assume that the size of each packet is $l$ bits. The length of TTI $t$ selected by the scheduled sensor is denoted by $k(t)\Delta t$, where $k(t) \in\{1,2 \ldots 14\}$ represents the amount of mini-slots consisting of the TTI. Similar to the literature \cite{10} \cite{14}, assuming that the transmission of each packet occupies a mini-slot. The number of packets that can be transmitted by the scheduled sensor $n$ is $b_{n}^{tra}(t)=k(t)$. Specially, when no sensors are scheduled, the system still needs to update the status to the next decisive moment, that is $u(t)=0, k(t)=1$ and $b_{n}^{tra}(t)=0$. Let $p_w$ denote the transmitted power to ensure that the packet is received successfully in channel state $w$, we have
\begin{equation}\label{eq2}
\frac{l}{\Delta t}=B \log \left(1+p_{w} \cdot \text{SNR}_{w}\right),
\end{equation}
 where $B$ is channel bandwidth, $\text{SNR}_{w}$ is the ratio of channel power gain to noise, e.g., $w / \delta^{2}$. Without loss of generality, $\text{SNR}_{1} < \text{SNR}_{2} \ldots \text{SNR}_{W}$, $p_{1}>p_{2}\ldots p_{W}$.
\subsection{Queue Model}
 Let $\lambda_n\in(0,1)$ indicate the sampling rate of packets of the sensor $n$, which can be periodically determined before the sensor is scheduled. The inter-arrival time between any two packets is $1 / \lambda_n$ mini-slots. Here, we consider that the sensor pre-processes the original data before caching and transmitting in some scenarios, e.g., the underlying device denoises or extracts the features of the collected image data in the camera monitoring network. There will be some associated energy cost, which limits the sampling and scheduling of the sensor. Let $c^{b}$ represent a sampling cost for generating a packet. 

With FCFS queue serving discipline, the queue model is defined as a Markov process updated as
\begin{strip}
\begin{equation}\label{eq3}
q_{n}(t+1)=\left\{\begin{array}{ll}
\min \left\{\left[q_{n}(t)+\lambda_{n} k(t)-b_{n}^{tra}\right]^{+}, q_{n}^{max}\right\}, &u(t)=n \\
\min \left\{\left[q_{n}(t)+\lambda_{n} k(t)\right]^{+}, q_{n}^{max}\right\}, & u(t) \neq n
\end{array}\right.,
\end{equation}
\end{strip}where $q_{n}(t) \in\left\{1,2 \ldots q_{n}^{max}\right\}$ represents the number of packets waiting in the queue at TTI $t$, and $q_{n}^{max}$ is queue capacity. In this paper, with jointly controlling the sample rate and transmitting action, the packet drop caused by queue overflow will not be considered.
\subsection{Age of Information Model}
We use AoI as the metric to measure the data freshness of sensor $n$ at the destination, denoted by $a_{n}^{des}(t)$, which records the time elapsed since the generation of the oldest packet of all packets that last successfully received. Considering that sensors can only transmit packets stored in the queue, the AoI at the destination can be evolved from the AoI at the queue. Let $a_{n}^{buf}(t)$ represent the AoI at the queue of the sensor $n$ at the beginning of TTI $t$, which indicates the time interval since the generation of the packet arrived earliest but not yet transmitted in the queue, namely the waiting time of the head packet. We use the discretized multiple of $\Delta t$ to evolve the AoI update process. If sensor $n$ is scheduled to transmit $b_{n}^{tra}(t)$ packets during $k(t)\Delta t$, then the AoI at the queue decrease to the time elapsed since the generation of $(b_{n}^{tra}(t)+1)th$ packet plus $k(t)$ (due to $k(t)$ mini-slots used for transmission). Otherwise, the AoI increase by $k(t)$. The dynamics of $a_{n}^{buf}(t)$ is given by
\begin{equation}\label{eq4}
a_{n}^{buf}(t+1)=\left\{\begin{array}{ll}
a_{n}^{buf}(t)+k(t)-\frac{b_{n}^{tra}(t)}{\lambda_{n}},&u(t)=n \\
a_{n}^{buf}(t)+k(t),&u(t) \neq n
\end{array}\right..
\end{equation}

If sensor $n$ sends packets, then the AoI at the destination decreases to the time elapsed since the generation of $(b_{n}^{tra}(t))th$ packet plus $k(t)$. Otherwise, the AoI increase by $k(t)$ The dynamics of $a_{n}^{des}(t)$ is given by
\begin{equation}\label{eq5}
a_{n}^{des}(t+1)=\left\{\begin{array}{ll}
a_{n}^{buf}(t)+k(t)-\frac{b_{n}^{tra}(t)-1}{\lambda_{n}},&u(t)=n \\
a_{n}^{des}(t)+k(t),&u(t) \neq n
\end{array}\right..
\end{equation}

Let $d_{n}(t) \triangleq a_{n}^{des}(t)-a_{n}^{buf}(t)$ denote the difference between the AoI at the destination and at the queue of sensor $n$, we have
\begin{equation}\label{eq6}
d_{n}(t+1)=\left\{\begin{array}{ll}
\frac{1}{\lambda_{n}}, & u(t)=n \\
d_{n}(t),  & u(t) \neq n
\end{array}\right..
\end{equation}

Obviously, $d_{n}(t)$ only have two discrete values, namely, $d_{n}(t)=\left\{0,1 / \lambda_{n}\right\}$. 
\section{PROBLEM FORMULATION AND SOLUTION}
We focus on the average age of information, energy cost and queue stability performance of the system over the infinite horizon. For sensor $n$, the time-average AoI at the destination is expressed as
\begin{equation}\label{eq7}
\overline{a_{n}^{ave\mbox{-}des}}(\mathbf{u}, \mathbf{k}, \boldsymbol{\lambda}) \triangleq \lim _{T \rightarrow \infty} \frac{1}{T} \sum_{t=1}^{T} E\left\{a_{n}^{buf}(t)+d_{n}(t)\right\}.
\end{equation}

Meanwhile, the average energy cost is defined as
\begin{equation}\label{eq8}
\overline{C_{n}^{ave}}(\mathbf{u}, \mathbf{k}, \boldsymbol{\lambda}) \triangleq \lim _{T \rightarrow \infty} \frac{1}{T} \sum_{t=1}^{T} E\left\{C_{n}(t)\right\},
\end{equation}
where $C_{n}(t) \triangleq c^{b} \lambda_{n}k(t)+p_{n}(t)k(t) \mathbb{I} \{u(t)=n\}$ is energy consumption during TTI $t$, which is composed of sampling cost and transmission cost. $\mathbb{I}(\cdot)$ is the indicator function. 

Due to the limited wireless resources, finite queues of sensors that cannot compete for the channel may become unstable. Therefore, we need to consider the queue stability region of all sensors.

Let $f_n$ denote the probability of transmission which equals the ratio of the mini-slots occupied by sensor $n$ to the total mini-slots, as follows,
\begin{equation}\label{eq9}
f_{n} \triangleq \lim _{T \rightarrow \infty} \frac{\sum_{t=1}^{T} \mathbb{I}\{u(t)=n\} \cdot k(t)}{\sum_{t=1}^{T} k(t)}.
\end{equation}

Obviously, $\sum_{n=1}^{N} f_{n} \leq 1$. Let $\tau_n$ refer to the average scheduling rate of sensor $n$ that equals the number of packets transmitted per mini-slot. Since only one packet can be transmitted in each mini slot, we have
\begin{equation}\label{eq10}
\tau_{n}=f_{n} \cdot 1.
\end{equation}

Thus, the queue is a discrete-time queue with arrive rate $\lambda_n$ and mean service rate $\tau_n$. According to Little’s law \cite{24}, the queue is stable when $\lambda_n < \tau_n$. Then we can easily deduce the following remark.
\begin{remark}\label{rem1}
A sufficient condition for the stability of all queue is $\sum_{n=1}^{N} \lambda_{n} < 1$.
\end{remark}
To improve the worst performance of AoI in the system and ensure the transmission fairness, we aim to minimize MAoI at the destination among sensors, under the constraints of average energy cost and queue stability at each sensor. The optimization problem is organized as $\mathcal{P} 0$,
\begin{strip}
\begin{equation}\label{eq11}
\begin{array}{c}
\mathcal{P}0: \min\limits_{\langle\mathbf{u}, \mathbf{k},  \boldsymbol{\lambda} \rangle}\max \left\{\overline{a_{1}^{ ave\mbox{-}des}}(\mathbf{u}, \mathbf{k}, \boldsymbol{\lambda}) \ldots \overline{a_{N}^{ave\mbox{-}des}}(\mathbf{u}, \mathbf{k}, \boldsymbol{\lambda})\right\} \\
\text { s.t. } \overline{C_{n}^{ave}}(\mathbf{u}, \mathbf{k}, \boldsymbol{\lambda}) \le C_{n}^{\max }, \forall n\in \mathcal{N} \\
\sum_{n=1}^{N} \lambda_{n}<1 \\
0<\lambda_{n}<1, \forall n\in \mathcal{N}.
\end{array}
\end{equation}
\end{strip}
Regarding the time average AoI in (7), the above problem is a complex stochastic process with both continuous and discrete variables. Moreover, the objective has no closed-form expression, which makes it impossible to solve with the traditional stochastic gradient decent optimization method. In particular, the second constraint in (11) reflects the high coupling of sampling rate among sensors and suggests that it is intractable to solve. To this end, we try to analyze the best trade-off relationship between the sampling rate and the time-average AoI through the single-sensor optimization problem, which is used to guide a uniform sampling policy that enables the problem decoupled into N single-sensor problems in section 5. To avoid ambiguity, we ignore the index of the sensor, and the singe-sensor optimization problem is $\mathcal{P} 1$,
\begin{equation}\label{eq12}
\begin{array}{ll}
\mathcal{P} 1: \min\limits_{\langle\mathbf{u}, \mathbf{k}, \lambda\rangle} \overline{a^{ave\mbox{-}des}}(\mathbf{u}, \mathbf{k}, \lambda) \\
\quad \text { s.t. } \overline{C^{a v e}}(\mathbf{u}, \mathbf{k}, \lambda) \le C^{\max } \\
\;\; \qquad 0<\lambda<1.
\end{array}
\end{equation}
However, $\mathcal{P} 1$ still involves a long-term optimization variable that keeps identical across TTIs, i.e., the sampling rate $\lambda$ as well as the shor-term strategies that need to be determined at each TTI, i.e., the scheduling decision $\mathbf{u}=\{u(1) \ldots u(T)\}$ and the length of TTI $\mathbf{k}=\{k(1) \ldots k(T)\}$, which inspires us to tackle the sampling rate and scheduling strategy separately.
\section{SINGLE-SENSOR OPTIMAL SAMPLING AND SCHEDULING}
In this section, fixing a long-term sampling rate $\lambda$, we can solve the scheduling problem by modeling it as a constrained Markov Decision Process (CMDP). Then we reformulate the sampling problem utilizing the steady-state distribution probability of the Markov chain. Finally, a bisection search method is adopted to find the optimal sampling rate.
\subsection{CMDP Formulation and Optimality Equation}
Given a typical $\lambda$, the CMDP for scheduling problems is described by a tuple $\left\langle \mathcal{S}, \mathcal{G}, P_{r}(\cdot \mid \cdot), r, c\right\rangle$.
\begin{itemize}
\item State: The state of the sensor at TTI $t$ is defined as $\mathbf{s}(t) \triangleq\left\{a^{buf}(t), d(t), q(t), h(t)\right\} \in \mathcal{S}$. Note that $a^{buf}(t)$ records waiting time of the head packet in the queue, which has an upper bound. Moreover, each of the remaining elements in the state vector takes a finite number of discrete values. Therefore, the state space is finite and countable.
\item Action: The control action of the sensor include whether to be scheduled and the length of TTI $t$, denoted by $\mathbf{g}(t) \triangleq\left\{u(t), k(t)\right\} \in \mathcal{G}$, where $u(t)=1$ represents the sensor is scheduled at TTI $t$, $u(t)=0$ otherwise.
\item Transition Probability: Let $P_{r}\{\mathbf{s}(t+1) \mid \mathbf{s}(t), \mathbf{g}(t)\}$ be the transfer probability of the system state from the current TTI $t$ to TTI $t+1$ when the action $\mathbf{g}(t)$ is adopted. Since the channel state is independent of the other elements in the state vector, the transition probability is then given by Eq.~\eqref{eq13}, where $w^{\prime}$ is the channel state at TTI $t+1$.
\item One\mbox{-}Step Reward: $\mathcal{S} \times \mathcal{G} \rightarrow R $ is the immediate reward of state-action pairs, denoted by $r(\mathbf{s},\mathbf{g})=a^{buf}+d$.
\item One\mbox{-}Step Cost: The cost function is defined as consumed energy in each state-action pair, which consists of sampling energy and transmission energy, represented as $c(\mathbf{s},\mathbf{g})=c^{b}\lambda k+pku$.
\end{itemize}
\begin{strip}
\begin{equation}\label{eq13}
\begin{array}{l}
P_{\mathrm{r}}\{\mathrm{s}(t+1) \mid \mathrm{s}(t), \mathbf{g}(t)\}=\\
P_{\mathrm{r}}\left\{a^{buf}(t+1), d(t+1), q(t+1), h(t+1) \mid \mathrm{s}(t), \mathbf{g}(t)\right\}=\\
P_{\mathrm{r}}\{h(t+1)\} P_{\mathrm{r}}\left\{a^{buf}(t+1), d(t+1), q(t+1) \mid \mathbf{s}(t), \mathbf{g}(t)\right\}=\\
\left\{\begin{array}{lr}
\alpha_{w^{\prime}},\quad \{ a^{buf}(t)+k-\frac{b^{tra}(t)}{\lambda}, \;\; \frac{1}{\lambda}, \quad \;\;q(t)+\lambda k-b^{tra}(t), \;\;w^{\prime}\}\mid \mathbf{s}(t), &\{1, k\} \\
\alpha_{w^{\prime}},\quad \{a^{buf}(t)+1, \qquad \qquad d(t), \;\;\;q(t)+\lambda,\qquad \qquad \quad w^{\prime}\} \mid \mathbf{s}(t), &\{0,1\} \\
0, &\text { otherwise }
\end{array}\right..
\end{array}
\end{equation}
\end{strip}

\begin{definition} \label{def1}
A stationary scheduling policy $\pi$ is defined as a mapping from each state $\mathbf{s}$ to the action of the sensor $\mathbf{g}$, namely $\pi : \mathcal{S} \rightarrow \mathcal{G} $.
\end{definition}

As commonly done, e.g., in \cite{12} and \cite{16}, to ensure the feasibility of the CMDP, we concentrate on stationary unichain policies. Then the average reward of any feasible scheduling policy $\pi$ over an infinite horizon is expressed as
\begin{equation}\label{eq14}
\overline{a^{ave\mbox{-}des}}(\pi \mid \lambda)=\lim_{T \rightarrow \infty} \frac{1}{T} \sum_{t=1}^{T} E_{\pi}\{r(\mathbf{s}(t), \mathbf{g}(t))\},
\end{equation}
where E(·) is the expectation taken with respect to the policy $\pi$. Meanwhile, the average cost is
\begin{equation}\label{eq15}
\overline{C^{ave}}(\pi \mid \lambda)=\lim _{T \rightarrow \infty} \frac{1}{T} \sum_{t=1}^{T} E_{\pi}\{c(\mathbf{s}(t), \mathbf{g}(t))\}.
\end{equation}

Accordingly, the goal of the infinite-horizon CMDP is to find the optimal policy $\pi$ to minimize the time-average AoI under the constraint of average energy cost as $\mathcal{P} 2$,
\begin{equation}\label{eq16}
\begin{array}{ll}
\mathcal{P} 2: \min \limits_{\pi} \overline{a^{ave\mbox{-}des}}(\pi \mid \lambda)\\
\quad \text { s.t. } \overline{C^{ave}}(\pi \mid \lambda) \le C^{max}.
\end{array}
\end{equation}

Next, to obtain the optimal policy, we transform the CMDP into an unconstrained MDP by introducing a non-negative Lagrange multiplier $y$. The average Lagrange cost is defined as
\begin{equation}\label{eq17}
\begin{array}{ll}
J_{y}(\pi \mid \lambda) \triangleq \lim \limits_{T \rightarrow \infty} \frac{1}{T} \sum_{t=1}^{T} E_{\pi}\{r(\mathbf{s}(t), \mathbf{g}(t))+\\
\qquad \qquad \qquad \qquad \qquad y c(\mathbf{s}(t), \mathbf{g}(t))\}.
\end{array}
\end{equation}
Then the optimal policy of the MDP is to minimize the Lagrange cost under the given Lagrange multiplier $y$.
\begin{equation}\label{eq18}
\pi_{y}^{*}=\arg \min _{\pi} J_{y}(\pi \mid \lambda).
\end{equation}

According to \cite[Theorem 11.7]{25} and \cite[Theorem 4.4]{26}, there exists an optimum policy for the CMDP with finite state and action space. And the solution of the CMDP has the following relationship with the MDP.
\begin{lemma} \label{lem1}
The optimal policy of the CMDP with a single constraint is a random combination of two deterministic and stationary policies, is given by
\begin{equation}\label{eq19}
\pi^{*}=\theta \pi_{y_{1}}^{*}+(1-\theta)\pi_{y_{2}}^{*},
\end{equation}
where $\pi_{y_{1}}^{*}$ and $\pi_{y_{2}}^{*}$ are the optimal policies for the unconstrained MDP with Lagrange multiplier $y_{1},y_{2}$, respectively. $\theta$ is the combination parameter, which can be calculated as follows,
\begin{equation}\label{eq20}
\theta=\frac{C^{max}-\overline{C^{ave}}\left(\pi_{y_{2}}^{*} \mid \lambda\right)}{\overline{C^{ave}}\left(\pi_{y_{1}}^{*} \mid \lambda\right)-\overline{C^{ave}}\left(\pi_{y_{2}}^{*} \mid \lambda\right)},
\end{equation}
where $\overline{C^{ave}}\left(\pi_{y_{1}}^{*} \mid \lambda\right)>C^{max}>\overline{C^{ave}}\left(\pi_{y_{2}}^{*} \mid \lambda\right)$.
\end{lemma}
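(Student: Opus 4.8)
The plan is to build on the Lagrangian reformulation in \eqref{eq17}--\eqref{eq18} and exploit the finiteness of the state and action spaces. For each fixed multiplier $y\ge 0$, the unconstrained average-cost MDP \eqref{eq18} admits an optimal deterministic stationary policy $\pi_y^*$ by standard average-reward MDP theory under the unichain assumption, which is exactly the existence guaranteed by \cite[Theorem 11.7]{25} and \cite[Theorem 4.4]{26}. First I would record how the energy cost depends on the multiplier: since $J_y$ weights $c(\mathbf{s},\mathbf{g})$ by $y$, a larger $y$ forces the minimizer to spend less energy, so $\overline{C^{ave}}(\pi_y^*\mid\lambda)$ is non-increasing in $y$. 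Because only finitely many deterministic stationary policies exist, this cost is a non-increasing, piecewise-constant step function of $y$.

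Next I would locate the binding multiplier. If some $y$ gives $\overline{C^{ave}}(\pi_y^*\mid\lambda)=C^{max}$, then $\pi_y^*$ alone solves $\mathcal{P}2$ and the claim holds with $\theta\in\{0,1\}$. Otherwise the step function jumps across the level $C^{max}$ at a threshold $y^*$, and picking $y_1<y^*<y_2$ on either side yields two deterministic policies with $\overline{C^{ave}}(\pi_{y_1}^*\mid\lambda)>C^{max}>\overline{C^{ave}}(\pi_{y_2}^*\mid\lambda)$, matching the hypothesis of the lemma. I would then randomize between them, reading $\pi^*=\theta\pi_{y_1}^*+(1-\theta)\pi_{y_2}^*$ through the induced occupation measure (equivalently, a time-zero coin flip that selects which stationary policy to run for the whole horizon). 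Under the unichain assumption both Ces\`aro limits exist, so the long-run average cost is affine in $\theta$:
\begin{equation}
\overline{C^{ave}}(\pi^*\mid\lambda)=\theta\,\overline{C^{ave}}(\pi_{y_1}^*\mid\lambda)+(1-\theta)\,\overline{C^{ave}}(\pi_{y_2}^*\mid\lambda).
\end{equation}
Setting the right-hand side equal to $C^{max}$ and solving for $\theta$ reproduces exactly \eqref{eq20}, and the straddling inequalities guarantee $\theta\in[0,1]$.

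Optimality of the randomized policy would then follow from strong Lagrangian duality for CMDPs with finite state and action spaces: the optimal value of $\mathcal{P}2$ equals $\max_{y\ge 0}\min_\pi J_y(\pi\mid\lambda)$, and $\pi^*$ attains this saddle point while keeping the energy constraint active. I expect the main obstacle to be precisely these two analytic facts --- the affinity of the long-run averages under randomization and the absence of a duality gap --- both of which rest on the finiteness/unichain hypotheses. Finiteness turns the cost-versus-$y$ curve into a step function so that two policies always suffice, while the unichain property supplies well-defined, initial-state-independent stationary limits that the convex combination inherits. Arguing that $\pi^*$ is not merely feasible but minimizes the age among all feasible policies is the delicate part, and there I would lean on the complementary-slackness characterization of \cite{25} and \cite{26} rather than reprove it from scratch.
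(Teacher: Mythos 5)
Your proposal is correct in substance, but note that the paper itself never proves Lemma~\ref{lem1}: it invokes \cite[Theorem 11.7]{25} and \cite[Theorem 4.4]{26} and takes both the two-policy mixture structure and formula \eqref{eq20} as given. What you have written is essentially the classical constructive argument (Beutler--Ross style) that underlies those citations: monotonicity of $\overline{C^{ave}}(\pi_{y}^{*}\mid\lambda)$ in $y$ via the standard exchange argument, the piecewise-constant step structure coming from finiteness of the class of deterministic stationary policies, straddling of the level $C^{max}$ at a critical multiplier, affinity of the long-run averages under randomization, and complementary slackness for optimality. Your explicit remark that $\theta\pi_{y_{1}}^{*}+(1-\theta)\pi_{y_{2}}^{*}$ must be read through the occupation measure (a coin flip at time zero that selects which stationary policy runs forever), rather than as state-wise action randomization, is precisely what makes \eqref{eq20} exact --- state-wise mixing would make the average cost a linear-fractional, not affine, function of the mixing probability --- and this point is glossed over in the paper even though its Algorithm~2 relies on the affine interpolation. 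Two small repairs to your sketch: first, for the complementary-slackness certification of optimality, $y_{1}$ and $y_{2}$ cannot be arbitrary points on either side of the jump; they must be taken close enough to the critical $y^{*}$ that, by finiteness of the policy class, $\pi_{y_{1}}^{*}$ and $\pi_{y_{2}}^{*}$ remain minimizers of $J_{y^{*}}$ itself, so that the mixture also attains the Lagrangian optimum at the common multiplier $y^{*}$; second, the dual function in your strong-duality statement should be $\min_{\pi}J_{y}(\pi\mid\lambda)-yC^{max}$ rather than $\min_{\pi}J_{y}(\pi\mid\lambda)$, since $J_{y}$ as defined in \eqref{eq17} does not subtract the constraint level. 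Neither repair changes the architecture of your argument; what your route buys over the paper's citation-only treatment is an explicit account of exactly which structural facts (finiteness, unichain, coin-flip mixing) the formula \eqref{eq20} depends on.
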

Moreover, for a given $y$, the deterministic and stationary policy $\pi_{y}^{*}$ for the MDP satisfies the following Bellman equation.
\begin{equation}\label{eq21}
\begin{array}{ll}
\delta_{y}+V_{y}(\mathbf{s})=\min \limits_{\mathbf{g}}\{r(\mathbf{s}, \mathbf{g})+y c(\mathbf{s}, \mathbf{g})+\\
\qquad \qquad \qquad \qquad \sum_{\mathbf{s}^{\prime}\in \mathcal{S}} P_{r}\{\mathbf{s}^{\prime} \mid \mathbf{s},\mathbf{g}\} V_{y}(\mathbf{s}^{\prime}) \},
\end{array}
\end{equation}
where $\delta_{y}$ is the optimal average Lagrange cost, $V_{y}(\cdot)$ is the value function, and $\mathbf{s}^{\prime}$ is the next state of $\mathbf{s}$. By now, the MDP in problem (18) is derived as an equivalent optimality equation with finite transition state, which can be solved by RVI, as shown in Algorithm 1. Next, we are to tackle the steady distribution of transition probability to obtain two Lagrange multipliers.
\begin{algorithm}[htbp]
\caption{Value iteration algorithm}
\begin{algorithmic}[1]

  \STATE\textbf{Initialize}: $v^{0}(\mathbf{s})=0$ for each state $\mathbf{s}$ in $\mathcal{S}$. And set $i=0,\Delta v=0$.
  \STATE For each state $\mathbf{s}$, compute:
  \STATE $v^{i+1}=\min\limits_{\mathbf{g}}\{r(\mathbf{s},\mathbf{g})+ y c(\mathbf{s}, \mathbf{g})+\gamma\sum\limits_{{\mathbf{s}^{\prime}\in \mathcal{S}}}P_{r}\{\mathbf{s}^{\prime}\mid\mathbf{s},\mathbf{g}\}V_{y}(\mathbf{s}^{\prime})\}$;
  \STATE $\Delta v(\mathbf{s})=\max \{\Delta v,  \lvert v^{i+1}(\mathbf{s})-v^{i}(\mathbf{s}) \rvert \}$;
\IF{$\Delta v(\mathbf{s})<\zeta$ for all state $\mathbf{s}\in S$}
    \STATE go to step 10.
\ELSE 
    \STATE $i=i+1$  and return to step2.
\ENDIF 
   \STATE For each state $\mathbf{s}$, compute:
   \STATE $\pi_{y}^{*}(\mathbf{s})=\arg \min\limits_{\mathbf{g}}\{r(\mathbf{s},\mathbf{g})+ y c(\mathbf{s}, \mathbf{g})+\gamma\sum\limits_{{\mathbf{s}^{\prime}\in \mathcal{S}}}P_{r}\{\mathbf{s}^{\prime}\mid\mathbf{s},\mathbf{g}\}V_{y}(\mathbf{s}^{\prime})\}$;
\end{algorithmic}
\label{algorithm1}
\end{algorithm}
\subsection{Steady-State Distribution and Optimal Scheduling policy}
Let $M$ indicate the number of elements in the state space of the CMDP. The probability distribution of the corresponding Markov chain is denoted by a vector $\boldsymbol{\beta}=\{\beta_1,\beta_2, \cdots \beta_M\}^{T}$, where $\beta_m$ denotes the steady probability of the $m$th state. We suppose that the $m$th state is $\mathbf{s}_m=\{a^{buf},d,q,w\}$, then the probability transition matrix of states is expressed by $\mathbf{X}_{M\times M}$ as
\begin{equation}\label{eq22}
\mathbf{X}=\left[\begin{array}{lc}
\chi^{\mathbf{s}_{1} \rightarrow \mathbf{s}_{1}} & \ldots \chi^{\mathbf{s}_{M} \rightarrow \mathbf{s}_{1}} \\
\vdots & \vdots \\
\chi^{\mathbf{s}_{1} \rightarrow \mathbf{s}_{M}} & \ldots \chi^{\mathbf{s}_{M} \rightarrow \mathbf{s}_{M}}
\end{array}\right],
\end{equation}
where $\chi^{\mathbf{s}_m \rightarrow \mathbf{s}_{m^{\prime}}}$ represents the probability of transition from state $\mathbf{s}_m$ to state $\mathbf{s}_{m^{\prime}}$.Two states are defined as follows,
\begin{equation}\label{eq23}
\mathbf{s}_m^{+}=\{a^{buf}+1,d,q+\lambda,w^{\prime}\},
\end{equation}
\begin{equation}\label{eq24}
\mathbf{s}_m^{-}=\{a^{buf}+k-\frac{b^{tra}}{\lambda},\frac{1}{\lambda},q+\lambda k-b^{tra},w^{\prime}\}.
\end{equation}

For a policy $\pi_{y}^{*}$ obtained by algorithm 1, we can get the scheduling action $\mathbf{g}_{\mathbf{s}_m}=\{u_{\mathbf{s}_m},k_{\mathbf{s}_m}\}$ in any state $\mathbf{s}_m$. Then according to (13), we have
\begin{equation}\label{eq25}
\chi^{\mathbf{s}_m \rightarrow \mathbf{s}_{m^{\prime}}}=\left\{\begin{array}{lll}
(1-u_{\mathbf{s}_m})\alpha_{w^{\prime}}, &  \mathbf{s}_{m^{\prime}}= \mathbf{s}_{m}^{+}\\
u_{\mathbf{s}_m}\alpha_{w^{\prime}}, & \mathbf{s}_{m^{\prime}}= \mathbf{s}_{m}^{-}\\
0, & \text{otherwise}
\end{array}\right..
\end{equation}

There are $\mathbf{X}\boldsymbol{\beta}=\boldsymbol{\beta}$ and $\sum_{m=1}^{M}\beta_m=1$ holds from the steady-state distribution property \cite{24}. Therefore, the steady distribution $\boldsymbol{\beta}$ can be calculated with respect to the following linear equation,
\begin{equation}\label{eq26}
\left[\begin{array}{c}
\mathbf{X}-\mathbf{I} \\
\mathbf{1}^{T}
\end{array}\right] \boldsymbol{\beta}=\left[\begin{array}{l}
\mathbf{0} \\
1
\end{array}\right],
\end{equation}
where $\mathbf{I}$ is the $M$-dimensional identity matrix, and $\mathbf{1}^{T}$ is the  $M$-dimensional row vector with all the entries being 1. We rewrite Eq.(14) and Eq.(15) under the policy $\pi_{y}^{*}$ as
\begin{equation}\label{eq27}
\overline{a^{ave\mbox{-}des}}(\pi_{y}^{*} \mid \lambda)=\sum_{m=1}^{M}(a^{buf}(\mathbf{s}_m)+d(\mathbf{s}_m))\beta_m.
\end{equation}
\begin{equation}\label{eq28}
\overline{C^{ave}}(\pi_{y}^{*} \mid \lambda)=\sum_{m=1}^{M}(c^b \lambda k_{\mathbf{s}_m}+p u_{\mathbf{s}_m} k_{\mathbf{s}_m})\beta_m.
\end{equation}

Then the Lagrange multipliers $y_1$ and $y_2$ can be found by sub-gradient descent method. The overall steps are described in Algorithm 2.
\begin{algorithm}[!t]
  \caption{Optimal scheduling policy for the CMDP}
  \begin{algorithmic}[2]
   \STATE \textbf{Initialize}: $y^{(0)}=0, i=0,$ compute $\pi_{y^{(0)}}^{*} $ by Algorithm 1 and $\overline{C^{ave}}(\pi_{y^{(0)}}^{*} \mid \lambda)$ according to Eq.(28).
   \IF{$\overline{C^{ave}}(\pi_{y^{(0)}}^{*} \mid \lambda) \le C^{max}$}
    \STATE {the energy constraint is satisfied.}
   \ELSE
    	\REPEAT{
    	 \STATE $i=i+1$;
         \STATE $y^{(i)}=y^{(i-1)}+\eta (\overline{C^{ave}}(\pi_{y^{(i-1)}}^{*}\mid \lambda)-C^{max})$;
         \STATE compute $\pi_{y^{(i)}}^{*}$ and $\overline{C^{ave}}(\pi_{y^{(i)}}^{*} \mid \lambda)$;
    	  }
    	 \UNTIL{$\overline{C^{ave}}(\pi_{y^{(i)}}^{*} \mid \lambda) < C^{max}$ and $\lvert y^{(i)}-y^{(i-1)}\rvert <\varepsilon$ or $i>i_{stop}$.}
	\ENDIF 
    \IF{$i<i_{stop}$}
    \STATE $\theta=\frac{C^{max}-\overline{C^{ave}}\left(\pi_{y^{(i)}}^{*} \mid \lambda\right)}{ \overline{C^{ave}}\left(\pi_{y^{(i-1)}}^{*} \mid \lambda\right)-\overline{C^{ave}}\left(\pi_{y^{(i)}}^{*} \mid \lambda\right)}$;
    \STATE $\pi^{*}=\theta\pi_{y^{(i-1)}}^{*}+(1-\theta)\pi_{y^{(i)}}^{*}$;
    \STATE $\overline{a^{ave-des}}(\pi^{*}\mid\lambda)=\theta \overline{a^{ave-des}}(\pi_{y^{(i-1)}}^{*}\mid\lambda)+(1-\theta)\overline{a^{ave-des}}(\pi_{y^{(i)}}^{*}\mid\lambda)$;
      
      \ELSE
         \STATE there is no the optimal policy.
      \ENDIF 
\end{algorithmic}
\label{algorithm2}
\end{algorithm}
\subsection{Optimal Sampling Rate}
Based on the steady distribution probability of the Markov chain, the problem of sampling rate can be expressed as $\mathcal{P} 3$,
\begin{equation}\label{eq29}
\begin{array}{ll}
\mathcal{P} 3: \min\limits_{\lambda}\sum_{m=1}^{M}(a^{buf}(\mathbf{s}_m)+d(\mathbf{s}_m))\beta_m\\
\quad \text { s.t. } \sum_{m=1}^{M}(c^b \lambda k_{\mathbf{s}_m}+p u_{\mathbf{s}_m} k_{\mathbf{s}_m})\beta_m \le C^{max}\\
\qquad \qquad0<\lambda<1.
\end{array}
\end{equation}
\begin{proposition}\label{pro1}
For $\mathcal{P} 3$, there exists the unique $\lambda^{*}$, when $\lambda<\lambda^{*}$, the time-average AoI decreases with the increase of the sampling rate, when $\lambda>\lambda^{*}$, time-average AoI increases sharply.
\end{proposition}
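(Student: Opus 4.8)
The plan is to split the objective of $\mathcal{P}3$ into the two contributions that enter the destination age, $\overline{a^{ave\mbox{-}des}}(\lambda)=\overline{a^{buf}}(\lambda)+\overline{d}(\lambda)$, and to show this sum is unimodal in $\lambda$. First I would pin down the $d$-term. From the recursion (6), $d(t)$ is reset to $1/\lambda$ at every service instant and is otherwise frozen; because queue recurrence (Remark~\ref{rem1}) forces the sensor to be scheduled infinitely often, the state $d=0$ is transient and the stationary vector $\boldsymbol\beta$ puts all its mass on $d=1/\lambda$. Hence $\sum_m d(\mathbf{s}_m)\beta_m=1/\lambda$, so the objective becomes $\overline{a^{buf}}(\lambda)+1/\lambda$, where $1/\lambda$ is strictly decreasing and strictly convex on $(0,1)$ and supplies the decreasing regime.

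Next I would study $\overline{a^{buf}}(\lambda)=\sum_m a^{buf}(\mathbf{s}_m)\beta_m$, the time-average head-of-line age. Two facts are needed. For \emph{monotonicity}, I would show $\overline{a^{buf}}$ is nondecreasing in $\lambda$ by a sample-path coupling: driving two systems with arrival rates $\lambda_1<\lambda_2$ under a common service/channel process and a fixed scheduling structure, the queue recursion (3) keeps the backlogs pathwise ordered, so the head-of-line packet of the heavier-loaded queue is never fresher and the induced $a^{buf}$ is ordered as well; taking Ces\`aro limits transfers the ordering to the stationary averages. For \emph{divergence}, the achievable service rate is $\tau=f\le 1$ from (9)--(10), so as $\lambda\uparrow\tau$ the load $\rho=\lambda/\tau\to 1$, the backlog approaches $q^{max}$, and $\overline{a^{buf}}$ grows sharply; this is the increasing regime.

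Finally I would combine the parts through the derivative $\tfrac{d}{d\lambda}\overline{a^{ave\mbox{-}des}}=\tfrac{d}{d\lambda}\overline{a^{buf}}-1/\lambda^2$. For small $\lambda$ the term $-1/\lambda^2$ dominates, so the derivative is negative and the age decreases; as $\lambda\uparrow\tau$ the sharply growing $\tfrac{d}{d\lambda}\overline{a^{buf}}$ makes the derivative positive, so the age increases. Continuity then yields a crossing point $\lambda^*$. For uniqueness I would invoke a single-crossing argument: $-1/\lambda^2$ is increasing in $\lambda$ and, if $\overline{a^{buf}}$ is convex (its marginal, of $\rho/(1-\rho)$ type, is increasing in the load), then $\tfrac{d}{d\lambda}\overline{a^{ave\mbox{-}des}}$ is itself increasing, so it changes sign exactly once, giving a unique $\lambda^*$ with the two regimes of the statement.

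The hard part will be the second step, because $\overline{a^{buf}}(\lambda)$ is known only implicitly: it is a functional of the stationary vector $\boldsymbol\beta$, which in turn depends on the optimal policy $\pi_y^*$ that varies with $\lambda$, so neither monotonicity nor convexity can be read off a closed form. I expect to handle the policy dependence either by an envelope-type argument (since $\pi_y^*$ is optimal at each $\lambda$, first-order variations through the policy vanish) or by fixing the threshold structure of $\pi_y^*$ obtained from the Bellman equation (21) and then proving the pathwise queue ordering; the classical convexity of FCFS age in the offered load would then deliver the single-crossing property needed for uniqueness.
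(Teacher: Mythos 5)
Your proposal has two genuine gaps, and you have already flagged the first one yourself: the entire single-crossing argument rests on monotonicity and convexity of $\overline{a^{buf}}(\lambda)=\sum_m a^{buf}(\mathbf{s}_m)\beta_m$, but $\boldsymbol{\beta}$ is induced by the optimal policy $\pi_y^{*}$, which is re-solved at every $\lambda$. Your coupling argument explicitly requires ``a fixed scheduling structure,'' which is exactly what is \emph{not} fixed as $\lambda$ varies; and an envelope-type argument has no traction here because the policy space is discrete (there is no first-order condition through which policy variations vanish). Likewise, the ``classical convexity of FCFS age in the offered load'' is a property of uncontrolled queues with exogenous service; it does not transfer to a finite-state controlled chain whose service process is chosen by a threshold policy that changes with $\lambda$. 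So the uniqueness claim, which is the substantive content of the proposition, is not actually established. The second gap is that your mechanism for the increasing regime is the wrong one for $\mathcal{P}3$. You attribute the blow-up to load saturation $\lambda \uparrow \tau$ with $\tau=f\le 1$, never engaging the energy constraint $\sum_m (c^b\lambda k_{\mathbf{s}_m}+p u_{\mathbf{s}_m} k_{\mathbf{s}_m})\beta_m \le C^{max}$. In the paper the sharp increase occurs because the energy budget caps the achievable scheduling frequency and, for $\lambda$ large enough, \emph{no feasible policy exists at all} (Algorithm 2 exceeds $i_{stop}$); the queue then destabilizes and the AoI accumulates without bound. This is why the turning point in Fig.~\ref{fig4} moves with $C^{max}$ --- a dependence your argument cannot reproduce unless you tie $\tau$ to the energy budget, which you do not. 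Note also that within the finite-state CMDP both $q$ and $a^{buf}$ are bounded, so stationary averages can never ``grow sharply'' to infinity; the divergence lives precisely in the regime where no stationary feasible policy exists, i.e.\ outside the stationary-distribution framework your whole proof is built on.

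For comparison, the paper does not attempt anything like your program: its ``proof'' is a qualitative argument --- for small $\lambda$ the sampling rate is the bottleneck so AoI falls; as $\lambda$ grows the energy constraint binds and further sampling no longer helps; beyond a threshold the CMDP is infeasible, the queue is unstable, and AoI accumulates --- closed by an appeal to the simulation in Fig.~\ref{fig4}. Your attempt is more ambitious and mathematically more honest in spirit (the decomposition $\overline{a^{ave\mbox{-}des}}=\overline{a^{buf}}+1/\lambda$ and the observation that $d=0$ is transient are both correct and clean), but as it stands it proves less than it appears to: the decreasing regime needs a bound on $\frac{d}{d\lambda}\overline{a^{buf}}$ near $\lambda=0$ that you do not supply, the increasing regime invokes the wrong constraint, and uniqueness hangs on an unproven convexity. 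If you want to complete it, the productive route is to make the energy constraint the centerpiece: show that $C^{max}$ induces an upper bound on the stationary scheduling fraction $f$, hence on $\tau$, and characterize the infeasibility threshold in $\lambda$ directly from Eq.~(28); that would recover both regimes and the dependence on $C^{max}$ that the proposition (and the simulations) actually exhibit.
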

\begin{proof}
 When $\lambda$ is small, the energy and cache capacity of the sensor for transmission are sufficient, the main factor affecting the AoI is the sampling rate of packets. When $\lambda$ continues to increase, the average energy cost is close to the upper bound. The energy constraint will limit the transmission of packets, and the increased sampling rate no longer reduces the AoI. In particular, there may be no policy to satisfy the energy constraint at a very high sampling rate, e.g., the iterations exceed the maximal number $i_{stop}$ in algorithm 2. This indicates that the transmission status and cache capacity of the sensor are not enough to meet such a high sampling rate. In this case, the queue will be unstable, and the AoI at the destination will accumulate over time. The simulation result confirms our discussion in Fig~\ref{fig4}.
 \end{proof}
 
Then we can find the optimal $\lambda$ with the minimum time-average AoI by the bisection search method. The infinite state space of the CMDP caused by the infinite sampling rate makes the problem intractable. Thus, we discretize $\lambda$ to $\{0.1, 0.2, \cdots, 1\}$ in Algorithm 3.
\begin{algorithm}[htbp]
  \caption{Optimal sampling rate}
  \begin{algorithmic}[1]
   \STATE \textbf{Initialize}: $\lambda_{l}=0.1,\lambda_{m}=1$, compute $\pi^{*}$ and $\overline{a^{ave\mbox{-}des}}(\pi^{*}\mid\lambda_{l})$ by Algorithm 2. 
    \REPEAT{
      \STATE $\lambda=(\lambda_{l}+\lambda_{m})/2$ and accurate to 0.1;
      \STATE compute $\pi^{*}$ and $\overline{a^{ave\mbox{-}des}}(\pi^{*}\mid\lambda)$;
       \IF{$\pi^{*}$ exists and $\overline{a^{ave\mbox{-}des}}(\pi^{*}\mid\lambda)<\overline{a^{ave\mbox{-}des}}(\pi^{*}\mid\lambda_{l})$}
         \STATE  $\lambda_{l}=\lambda$;
        \ELSE 
	     \STATE  $\lambda_{m}=\lambda$;
       \ENDIF 
       }
     \UNTIL{$\rvert\lambda-\lambda_{l}\rvert<0.1$ or $\rvert\lambda_{m}-\lambda\rvert<0.1$}
   \STATE $\lambda^{*}=\lambda_l$;
   \end{algorithmic}
\label{algorithm3}
\end{algorithm}
\section{MULTI-SENSOR SUB-OPTIMAL SAMPLING AND SEMI-DISTRIBUTED SCHEDULING SCHEME}
In the multi-sensor case, we are to minimize the MAoI among sensors in the system. That is equivalent to keeping the AoI of each sensor as small as possible.

To solve the optimal sampling rate $\boldsymbol{\lambda^{*}}=\{\lambda_1^{*}, \lambda_2^{*}, \cdots, \lambda_N^{*}\}$, we can iteratively list different sampling rate combinations $\boldsymbol{\lambda}$ and compare corresponding MAoI to find the optimal value, which has the curse of dimensionality limiting the effectiveness of the solution with a growing number of sensors. To address this issue, we note that algorithm 3 ensures that the designed sampling rate will not exceed the turning point of rapid deterioration of AoI in Figure~\ref{fig4}. Therefore, the sampling rate is insensitive to the upper bound value in the feasibility domain, and a uniform sampling policy is a sub-optimal solution realizing the consistent data freshness of each sensor as far as possible.

\textbf{Uniform sampling bound}: The sampling rate of each sensor has the same upper bound in a system where only one of $N$ sensors can transmit packets at each TTI, e.g., $1 / N$ .

Afterward, we converted $\mathcal{P} 0$ as follows,
\begin{equation}\label{eq30}
\begin{array}{c}
\mathcal{P} 4: \min\limits_{\langle\mathbf{u}, \mathbf{k}, \lambda \rangle}\max \left\{\overline{a_{1}^{ ave\mbox{-}des}}(\mathbf{u}, \mathbf{k}, \boldsymbol{\lambda}) \ldots \overline{a_{N}^{ave\mbox{-}des}}(\mathbf{u}, \mathbf{k}, \boldsymbol{\lambda})\right\} \\
\text { s.t. } \overline{C_{n}^{ave}}(\mathbf{u}, \mathbf{k}, \boldsymbol{\lambda})<C_{n}^{\max }, \forall n\in \mathcal{N} \\
0<\lambda_{n}< \frac{1}N, \forall n\in \mathcal{N}.
\end{array}
\end{equation}

The above problem can be decomposed into $N$ single-sensor problems that are solved distributed according to the Algorithm 1-3. To meet the constraint that only one sensor can be scheduled in each TTI, we provide a semi-distributed scheduling policy.

\begin{algorithm}[!b]
  \caption{Semi-distributed Scheduling}
  \begin{algorithmic}[1]
   \STATE \textbf{Initialize}: $t=0$. The initial position of the mini-slot in a slot $i=1$. The AoI of all sensors have been synchronized initially, namely, $a_n^{buf}(0)=0$,$a_n^{des}(0)=0, \forall n$. Solving $\mathcal{P} 1$ to obtain an off-line sampling and scheduling policy for each sensor. At the beginning of TTI $t$:
   \STATE \textbf{Updating policy of all sensors}:All sensors update scheduling actions locally, and some report their state and actions $\{a_n^{des}(t), u_n(t), k_n(t) \mid u_n(t)=1$ and $k_n(t)+i<14\}$ to the central scheduler.
   \STATE \textbf{Mini-slot allocation at the central scheduler}:If the set $\Omega(t)$ is not empty, the central scheduler will select the sensor $n^{\prime}$ with the largest AoI and broadcast the decision $u(t)=n^{\prime}, k(t)=k_{n^{\prime}}(t)$ to all sensors.
   \STATE \textbf{Updating state of system}:
   {
       \IF{$i+k(t)= 14$}
        \STATE{$i=1$;}
       \ELSE
         \STATE { $i=i+k(t)$;}
       \ENDIF
       \STATE \textbf{} Update states of all sensors as (3)-(6) and return to step 2.
   }
\end{algorithmic}
\label{algorithm4}
\end{algorithm}
\textbf{Semi-distributed scheduling scheme}: At the beginning of $t$, sensors locally determine scheduling actions based on their state according to the solution of $\mathcal{P} 1$. Let $\Omega$ be the set of sensors that need to send packets and their channel is stable within the required TTI length. Only these sensors will send their state and scheduling actions to the central scheduler. Given the above sampling strategy, the worst-case AoI should be scheduled to avoid excessive AoI accumulation. Then the sensor with the largest AoI in $\Omega(t)$ is selected and assigned with the corresponding number of mini-slots. This is consistent with the greed strategy, which is proved to be the optimal policy to minimize AoI performance in \cite{27}.

In this way, the central scheduler does not need to know all states of all sensors, which avoids huge communication overhead and is friendly to large-scale IoT networks. More details are given in algorithm 4. 
\section{Simulation}
This section provides numerical results to illustrate the performance of the proposed scheme. The related parameters are summarized in Table 1.

To verify the performance of our proposed scheme, we compare it with the following baselines.

1) \emph{Proposed scheme without sampling control (Proposed scheme without SC)}: To meet the constraints of energy consumption and queue stability, each sensor generates packets at the minimum sampling rate, e.g., 0.1.

2) \emph{Slot based scheduling without sampling control (Slot based scheduling without SC)}: This scheme is the same as the greedy strategy in [27], that is, each sensor generates one packet per slot, and the sensor with the largest AoI that has transmission requirements is scheduled to send a packet at the slot boundary rather than the mini-slot. To be fair, we set the maximum number of packets being transmitted during a slot as 14 while keeping the fixed sampling rate consistent with baseline 1.

3) \emph{Slot based scheduling with sampling control (Slot based scheduling):} Based on slot based scheduling scheme, the control of sampling rate as the proposed scheme is adopted.
\begin{table}[htbp]
\centering
\caption{simulation parameters}
\scriptsize
\begin{tabular}{cccc}\hline
System Parameters & values\\\hline
The length of a slot & $1ms$\\
The length of a mini-slot & $\Delta t=1/14ms$\\
Channel Bandwidth & $B=180k Hz$\\
Channel States & $W=5$\\
Channel Probability Distribution & $\alpha_{w}=1/W,\forall w$\\
Signal to Noise Ratio & $\{-20,-10,0,10,20\}dB$\\
Packet Size & $l=8bits $\\
Sampling Cost & $c^b=1J $\\
Discount Factor in Algorithm 1& $\gamma=0.95$ \\
Convergence Parameters in Algorithm 1 & $\zeta=0.01$\\
Convergence Parameters in Algorithm 2 & $\varepsilon=0.01$\\\hline
    \end{tabular}
\label{tab1}
    \end{table}
\subsection{The Single-sensor Case}
\begin{figure}[!t]
    \centering
    \includegraphics[width=0.45\textwidth]{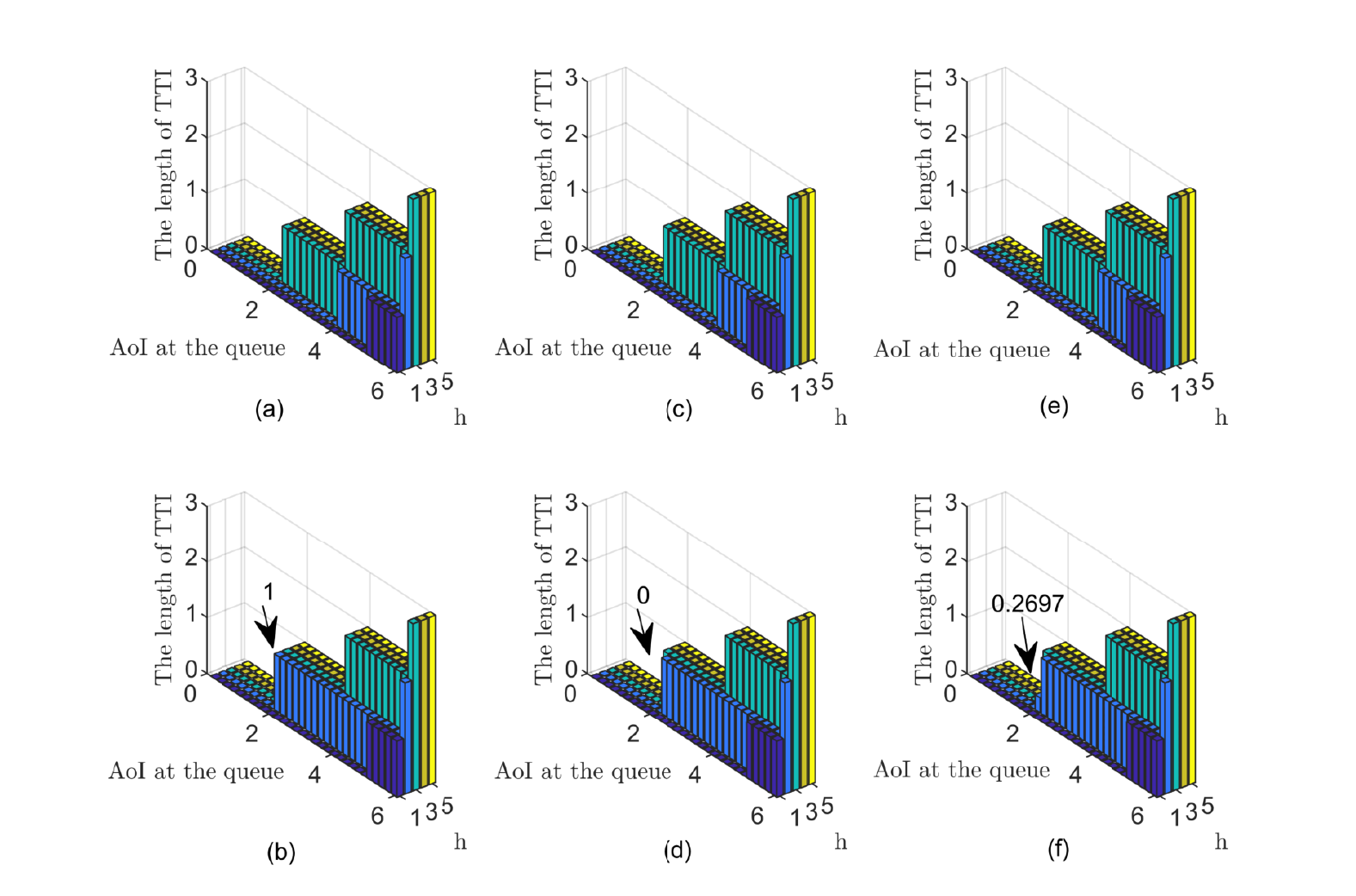}
    \caption{Two deterministic policies and the optimal policy.
$\textbf{(a)} \pi(y_1^*), d=0$ $\textbf{(b)} \pi(y_1^*), d=1 / \lambda$\\
$\textbf{(c)} \pi(y_2^*), d=0$ $\textbf{(d)} \pi(y_2^*), d=1 / \lambda$ \\
$\textbf{(e)} \pi(y^*), d=0$ $\textbf{(f)} \pi(y^*), d=1 / \lambda$}
\label{fig3}
\end{figure}
\begin{figure}
\centering
\includegraphics[width=0.45\textwidth]{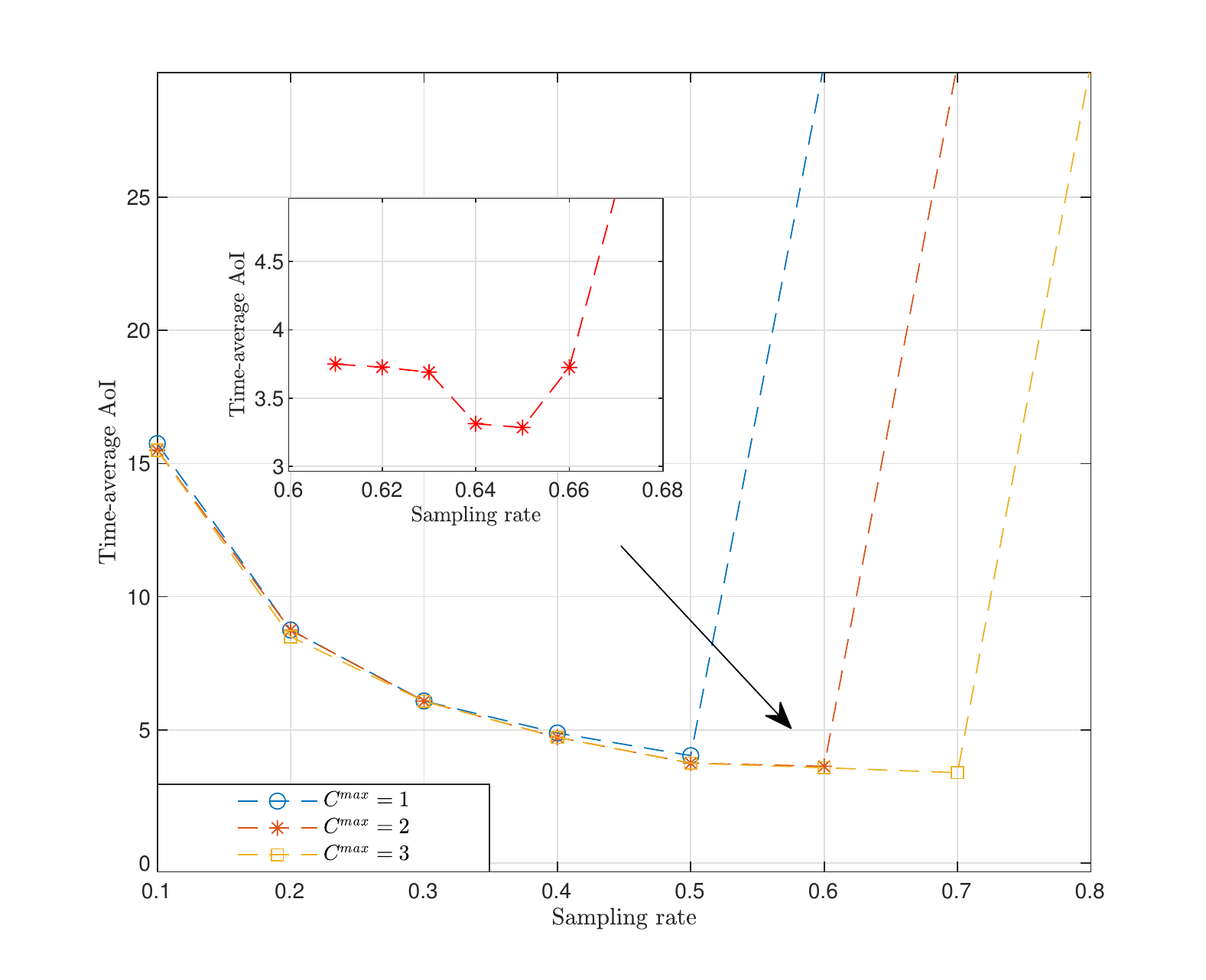}
\caption{Time-average AoI performance as sampling rate $\lambda$.}
\label{fig4}
\end{figure}
We assume that the length of the queue is 3 packets, the sampling rate is 0.5 packets per mini-slot, and the average energy constraint $C^{max}=1J$. Fig.~\ref{fig3} visualizes two deterministic and stationary policies $\pi_{y_1}^{*}$, $\pi_{y_2}^{*}$, as well as the optimal policy $\pi_{y}^{*}$. The z-axis is the product of scheduled actions $(u_{\mathbf{s}} \cdot k_{\mathbf{s}})$ at state $\mathbf{s}$, where 0 denotes the sensor does not send packets, the non-zero number represents the length of TTI when the sensor sends packets. It shows that two deterministic and stationary policies are different in one state, that is, the (2, 2) indicated by the black arrow. Meanwhile, the TTI length of the optimal scheduling policy in this state is 0.2397, which is the probability of the sensor selecting a mini-slot for transmitting. Moreover, for fixed $d$ and $h$, we note that the sensor chooses longer TTI when AoI at the queue is large. And the worse channel condition is, the sensor starts transmitting when $a^{buf}$ is greater. This indicates that in order to avoid consuming a large of transmission energy, the sensor waits until staleness cannot be tolerated anymore. When the channel quality is good, the relatively sufficient energy enables packets to be sent timely to maintain the freshness of the data.
\begin{figure}[!t]
\centering
\includegraphics[width=0.5\textwidth]{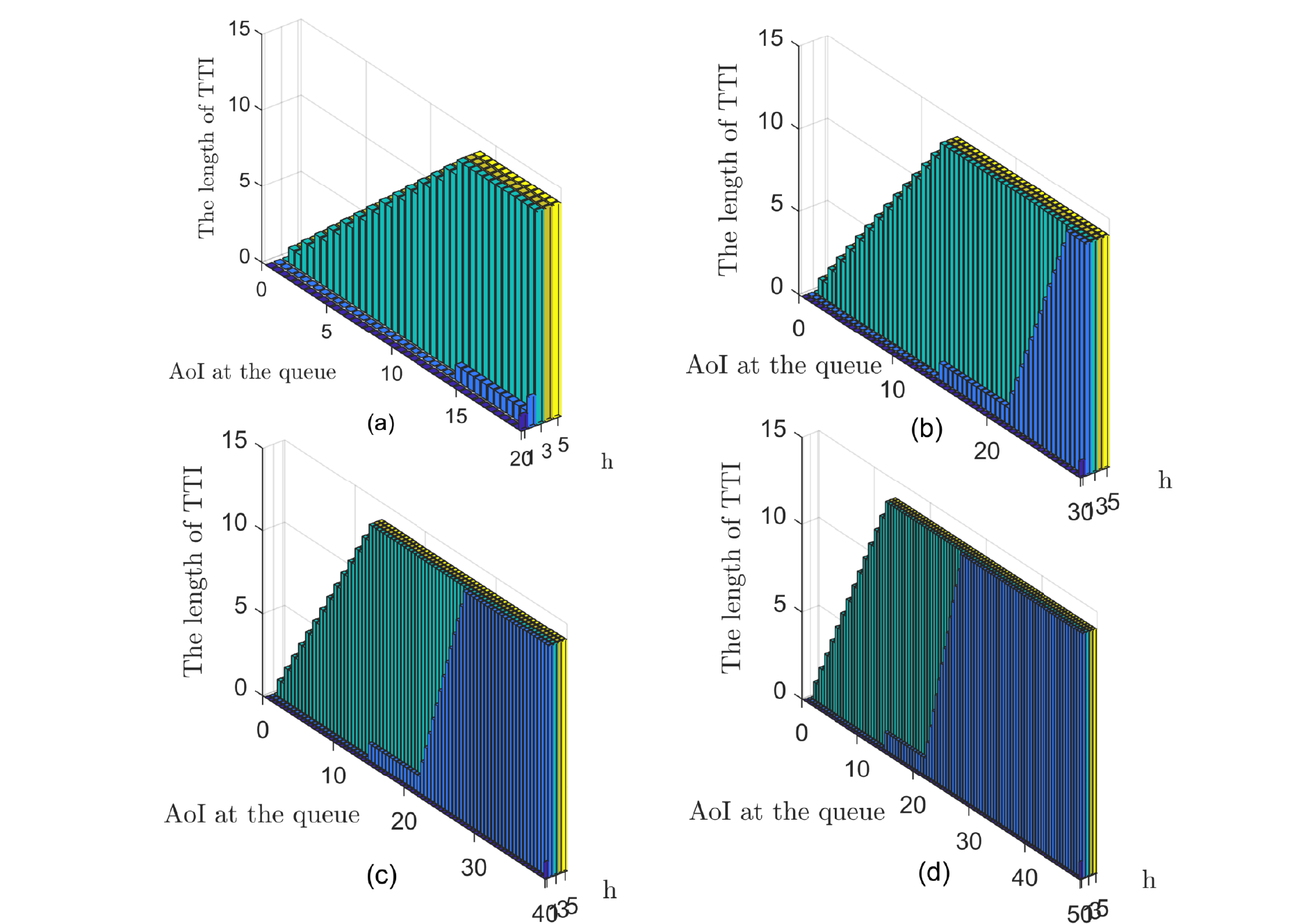}
\includegraphics[width=0.5\textwidth]{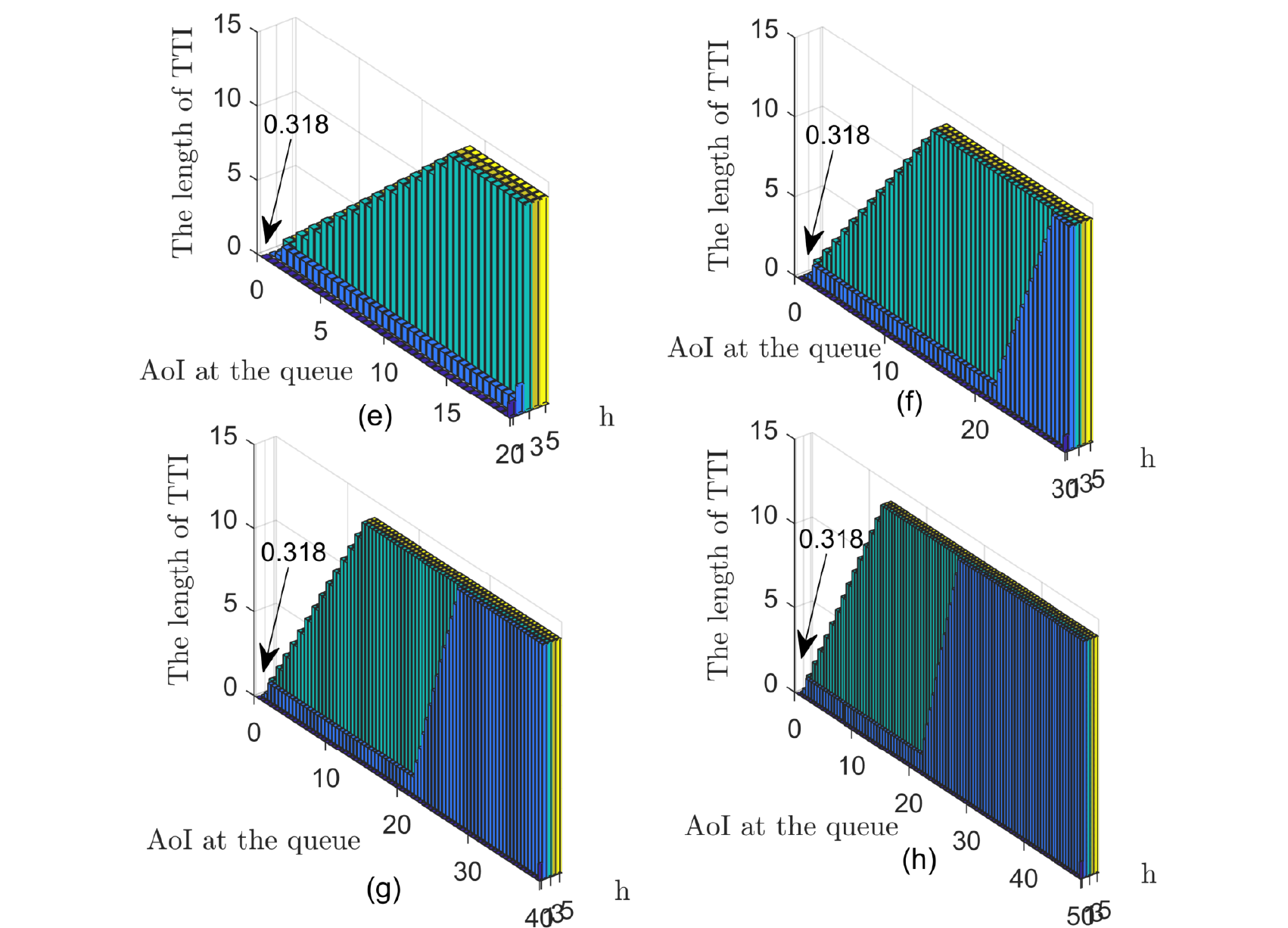}
\caption{The optimal policy as queue capacity $q^{max}$.
$\textbf{(a)} q^{max}=20, d=0$ \qquad $\textbf{(b)} q^{max}=30, d=0$ \\
$\textbf{(c)} q^{max}=40, d=0$ \qquad $\textbf{(d)}q^{max}=50, d=0$ \\
$\textbf{(e)} q^{max}=20, d=1 / \lambda$ \quad $\textbf{(f)} q^{max}=30, d=1 / \lambda$ \\
$\textbf{(g)} q^{max}=40, d=1 / \lambda$ \quad $\textbf{(h)} q^{max}=50, d=1 / \lambda$}
\label{fig5}
\end{figure}

Fig.~\ref{fig4} describes the relationship between the time-average AoI at the destination and the sampling rate under different energy constraints. Intuitively, as $\lambda$ increases, the time-average AoI first decreases. When sampling rate exceeds a certain threshold, the time-average AoI approaches infinity. This is because limited energy and cache capacity cannot provide high-rate sampling and transmission tasks. Then there is no feasible scheduling policy, which leads to AoI accumulates over time. To observe the trend of AoI in more detail, we further improve the accuracy of the $\lambda$ to 0.01 at the turning point of the threshold.

In Fig.~\ref{fig5}(a) - Fig.~\ref{fig5}(h), we depict the effect of queue capacity on the scheduling policy. When $q^{max}$ is larger, that is, the AoI at the queue $a^{buf}$ is bigger. We see that the sensor always keeps waiting under the worst channel state to avoid consuming a lot of transmission energy before the queue is almost full, then it has to choose the shortest TTI (a mini-slot) to transmit a packet. In other channel states, the length of TTI rises gradually with the increase of the AoI at the queue. When $a^{buf}$ reaches a threshold, the whole slot length (14 mini-slots) is always selected. This is due to the fact that packets have been waiting in the queue for too long, the increasing age of information makes the sensor choose the longest TTI to transmit as many as possible packets to improve data freshness.
\begin{figure}
	\centering
	\includegraphics[width=0.45\textwidth]{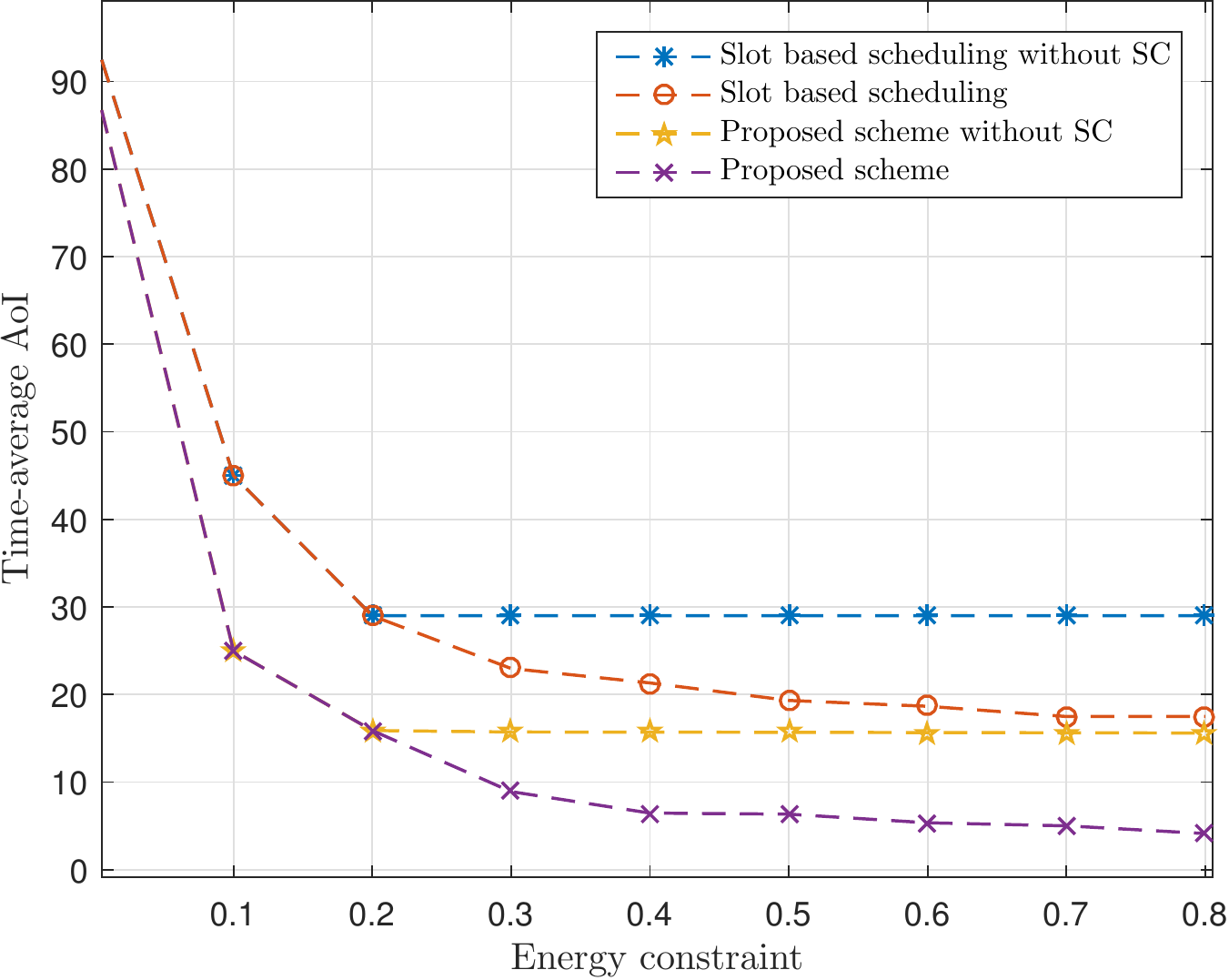}
	\caption{Time-average AoI performance as energy constraint $C^{max}$.}
	\label{fig6}
\end{figure}
\begin{figure}
	\centering
	\includegraphics[width=0.45\textwidth]{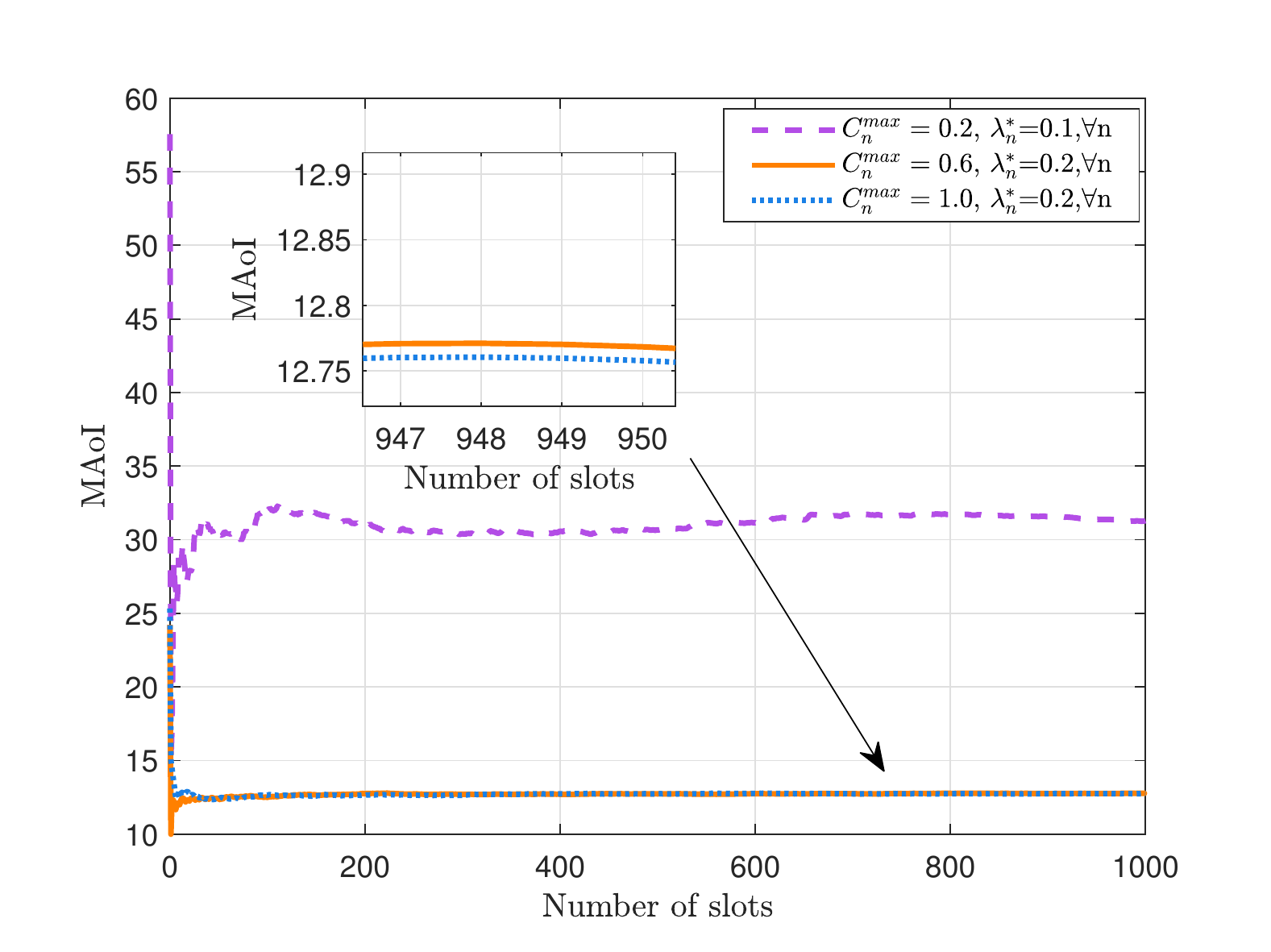}
	\caption{MAoI-energy trade-off and MAoI-queue stability trade-off.}
	\label{fig7}
\end{figure}

 Fig.~\ref{fig6} plots the AoI-energy tradeoff curves. The time-average AoI decreases monotonically with the increase of energy constraint, indicating the sensor has enough energy to transmit packets. When $C^{max}$ approaches zero, the time-average AoI approaches infinity. In addition, without control over the sampling rate, the time-average AoI changes little under the larger energy constraint. This is because the sampling rate of packets is fixed and the sufficient transmission capacity of the sensor is not fully utilized. Our proposed scheme consistently outperforms all the three baseline solutions, which adaptively make good use of energy to achieve a sustainable small reduction in the AoI.
 
\subsection{The Multi-sensor case}
This part evaluates the MAoI performance of a system with four sensors and each sensor has a queue capacity of 10 packets. Simulation results are obtained over continuous $10^3$ slots.
\begin{figure}
\centering
\includegraphics[width=0.45\textwidth]{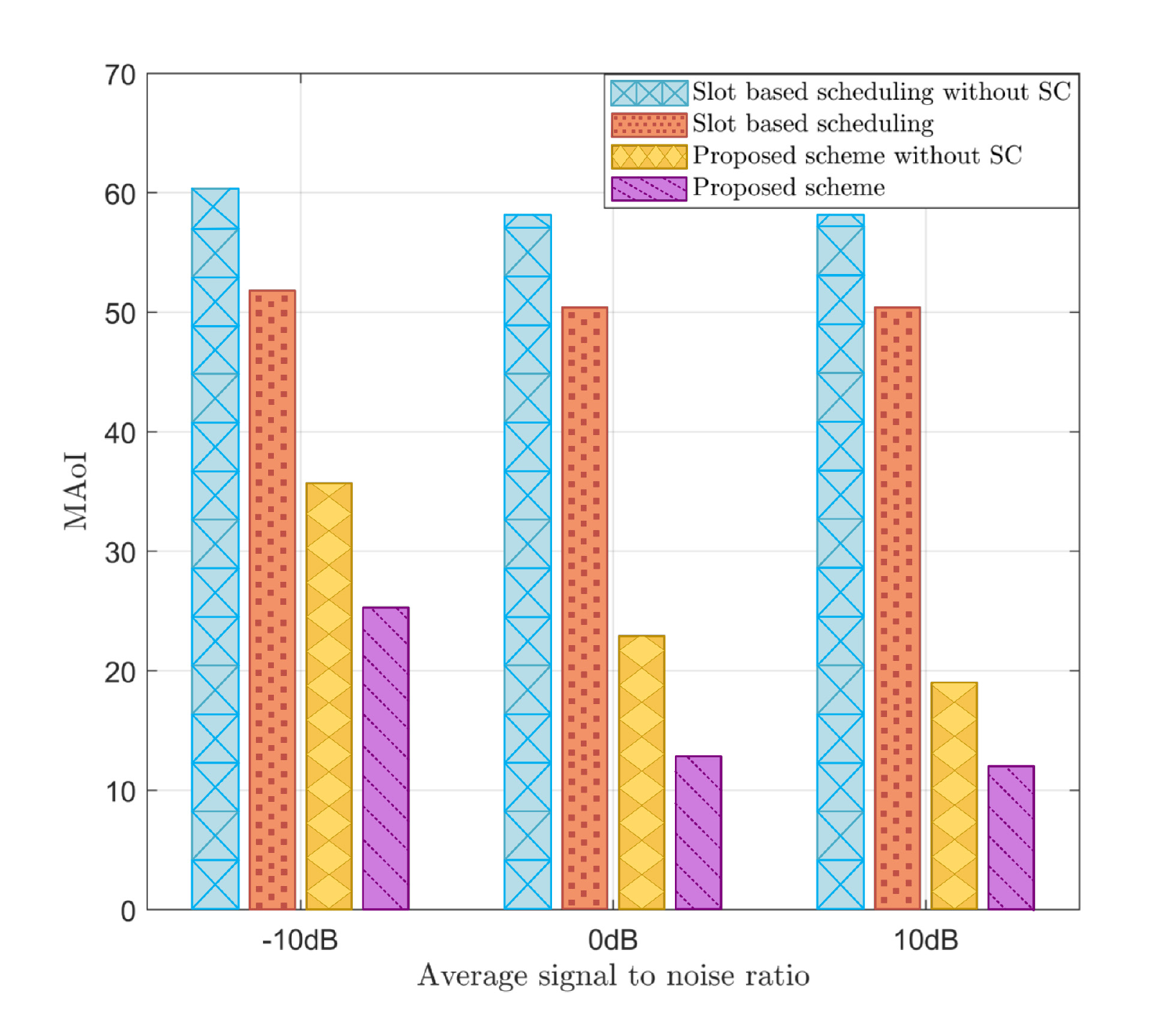}
\captionsetup{justification=centering} 
\caption{AoI performance as average \text{SNR}. ($C_n^{max}=0.6J$)}
\label{fig8}
\end{figure}
The effect of energy and queue stability constraint is seen directly from the optimal sampling rate and the performance of MAoI in Fig.~\ref{fig7}. When the primary reason for limiting sampling and transmission of packets is energy, abundant energy can realize a higher sampling rate and leads to a significant reduction in MAoI, e.g., blue and purple lines. We can attribute it to the reasons that there are more transmission opportunities under poor channel states. When the main reason is queue stability, more relaxed energy constraint no longer results in a remarkable decline in MAoI, e.g., blue and yellow lines, and the more powerful transmission capacity is not being fully utilized.
\begin{figure}
	\centering
	\includegraphics[width=0.45\textwidth]{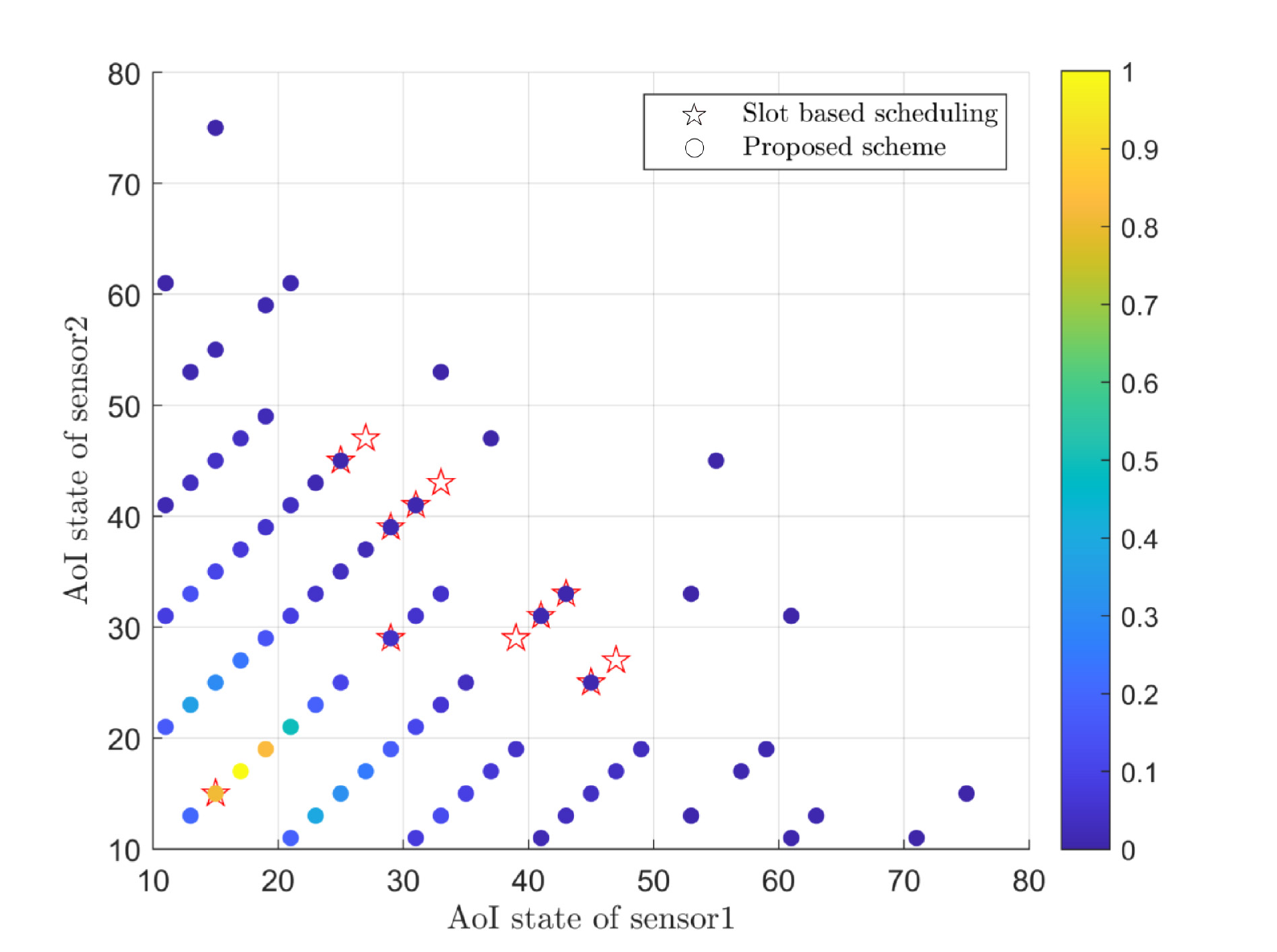}
	\caption{AoI performance of two sensors. ($C_n^{max}=0.2J$)}
	\label{fig9}
\end{figure}

Fig.~\ref{fig8} demonstrates the MAoI performance under different average $\text{SNR}$. As can be observed from the figure, for the better channel quality case, the energy consumption of transmitting packets becomes less. Sensors have more opportunities to be scheduled, resulting in a significant reduction in MAoI. However, this reduction is not obvious under the slot based scheduling without SC and the proposed scheme without SC. This is due to the fact that the packets of sensors are always generated at a low and fixed rate, which can not provide more data for transmission under the better channel state. Our proposed scheme can well adapt to the dynamic channel environment in time to increase the transmission probability and improve data freshness.

To show the performance of the proposed scheme in balancing and improving the data freshness, we depict the possible AoI state in a system of two sensors at the end of each slot in Fig.~\ref{fig9}. The lighter the color of the dot, the more frequently the state value appears. Intuitively, with our proposed scheme, the possible state values are more than the baseline, which happens when the channel state is bad. However, this unfair state does not occur many times, the frequency of the more uniform and smaller state value of the two sensors is higher. This is because once the channel state turns better, an adaptive amount of mini-slot resources can be allocated to the sensor with larger AoI in time to alleviate varying degrees of the imbalance between the two sensors in data freshness.

\begin{figure}
	\centering
	\includegraphics[width=0.45\textwidth]{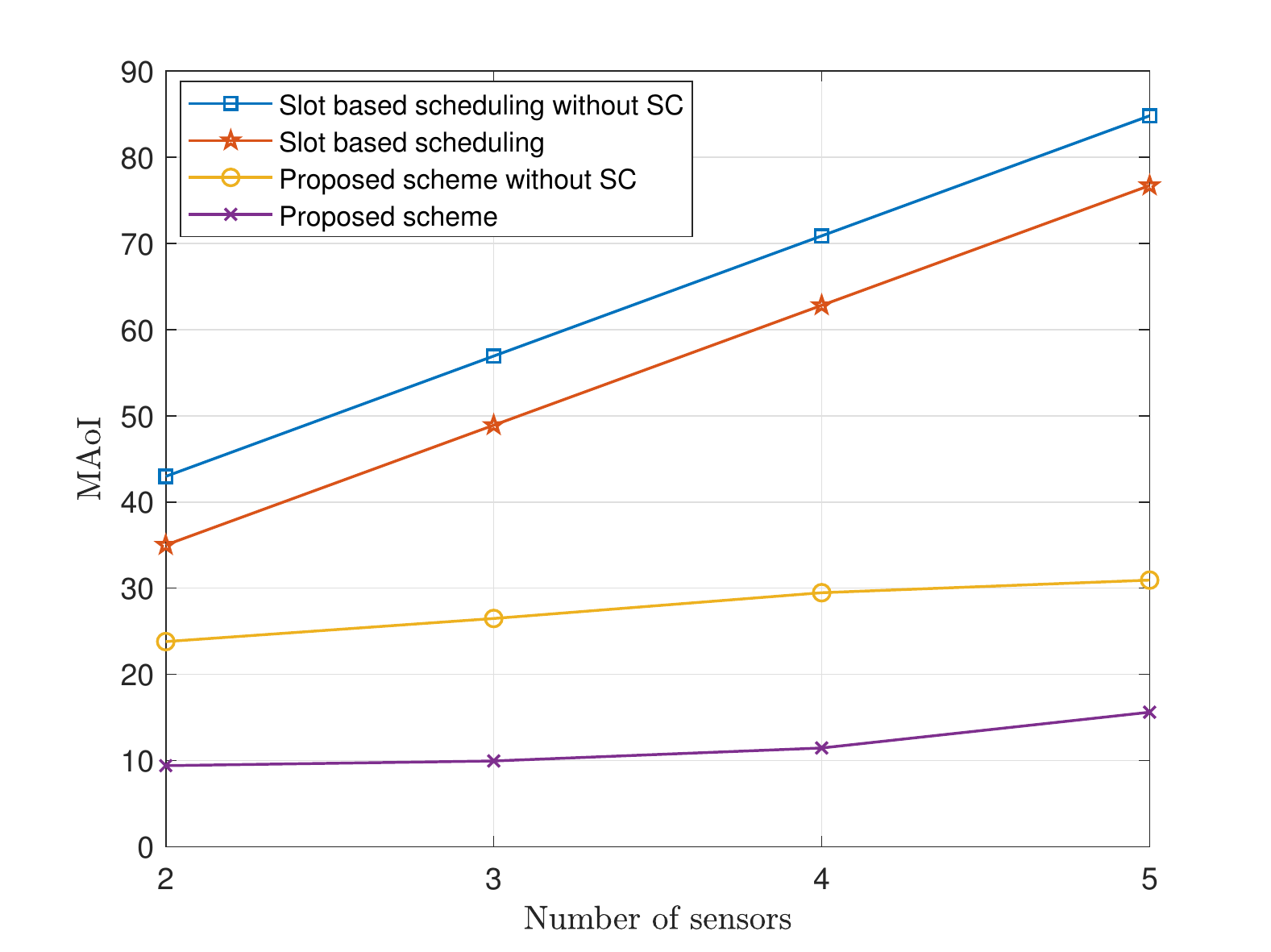}
	\caption{MAoI performance as number of sensors $N$. ($C_n^{max}=0.2J$)}\label{fig10}
\end{figure}

We evaluate the performance of our proposed scheme by comparing it with that of other baselines in Fig.~\ref{fig10}. It is seen that the data freshness becomes worse with the increase of the number of sensors. The reason behind is that the channel competition is more intense and the transmission opportunity of each sensor is reduced. What's more, the MAoI increases linearly with the number of sensors under the slot based scheduling, while the proposed scheme can maintain high data freshness and a relatively slow growth rate through fine-grained mini-slot resource allocation.
\section{Conclusion}
This paper studied the minimization of MAoI in the URLLC monitoring system. Each sensor sampled packets periodically and  sent them to the same destination through a time-varying channel with the constraints of energy and queue stability. An optimization framework of joint sampling and non-slot based scheduling policy was proposed. For single-sensor case, we first modelled the scheduling problem as a CMDP solved by RVI and sub-gradient descent method. Then we revealed the relationship between sampling rate and time-average AoI in terms of the steady-state distribution of the Markov chain. For multi-sensor case, a sub-optimal sampling and a semi-distributed scheduling scheme were proposed. The experimental results showed the effectiveness of the proposed scheme on reducing the MAoI. For future research, we will focus on more complex network models, such as random packets arrival and queue management.

\section{References Section}
\bibliographystyle{IEEEtran}
\bibliography{IEEEabrv,Ref}

\newpage

\vfill

\end{document}